\def\eqref#1{equation~\ref{#1}}
\def\1{\bm{1}}
\DeclareMathAlphabet{\mathsfit}{\encodingdefault}{\sfdefault}{m}{sl}
\SetMathAlphabet{\mathsfit}{bold}{\encodingdefault}{\sfdefault}{bx}{n}
\newtheoremstyle{iclrthm}
  {0pt}
  {0pt}
  {\itshape}
  {0pt}
  {\bfseries}
  {.}
  {0.5em}
  {}
\theoremstyle{iclrthm}
\newtheorem{theorem}{Theorem}
\title{Watermarking Discrete Diffusion Language Models}
\author{Avi Bagchi \\
University of Pennsylvania\\
\texttt{aviba@seas.upenn.edu} \\
\AND
Akhil Bhimaraju, Moulik Choraria, Daniel Alabi \\
University of Illinois Urbana--Champaign \\
\AND
Lav R. Varshney \\
University of Illinois Urbana--Champaign \\
Stony Brook University
}
\begin{document}

\maketitle

\begin{abstract}
Watermarking has emerged as a promising technique to track AI-generated content and differentiate it from authentic human creations. While prior work extensively studies watermarking for autoregressive large language models (LLMs) and image diffusion models, it remains comparatively underexplored for discrete diffusion language models (DDLMs), which are becoming popular due to their high inference throughput. In this paper, we introduce one of the first watermarking methods for DDLMs. Our approach applies a distribution-preserving Gumbel-max sampling trick at every diffusion step and seeds the randomness by sequence position to enable reliable detection. We empirically demonstrate reliable detectability on LLaDA, a state-of-the-art DDLM. We also analytically prove that the watermark is distortion-free, with a false detection probability that decays exponentially in the sequence length. A key practical advantage is that our method realizes desired watermarking properties with no expensive hyperparameter tuning, making it straightforward to deploy and scale across models and benchmarks.
\end{abstract}

\section{Introduction}

The rapid development and widespread deployment of artificial intelligence (AI) agents necessitate guarantees of authenticity and traceability \citep{CiC-1-4-31, Biden2023EO14110, Biden2023VoluntaryAICommitments}. A particularly serious security concern is distinguishing AI-generated text from human-written content \citep{dathathri_synth}, as nefarious actors exploit generative AI to create deceptive content \citep{tang_science, tarsney_deception}. Such misinformation  pollutes the information environment and compromises the integrity of data used to train future AI models. Since the reliability of post-hoc automated detectors degrades as model quality improves \citep{CiC-1-4-31, jawahar_survey, varshney2020limits}, watermarking generative models has surfaced as a desirable alternative \citep{aaronson2022watermark, kirchenbauer_water}.

A watermark subtly shifts the output distribution of a generative model so that it remains invisible to humans yet can be detected algorithmically \citep{CiC-1-4-31,kirchenbauer_water,alabi_entropy}. For instance, 
\citet{kirchenbauer_water} propose partitioning the vocabulary into a red and green list with the model favoring the latter in sampling. Strongly biasing the sampling procedure, however, creates a precarious tradeoff between detectability and distortion \citep{christ_water, giboulot_tradeoff}. A watermark signal that is too strong, for example, can raise the perplexity, indicating that the watermark unduly increases the entropy of the distribution of tokens relative to a baseline language model. Ideally, a watermark should be detectable on watermarked text (i.e.\ completeness) while maintaining a low false positive rate on unwatermarked content (i.e.\ soundness). It should also preserve the quality of the generated text (i.e.\ distortion-freeness) and remain robust to bounded post-generation edits (i.e.\ robustness). We expand upon these desired characteristics and detail additional related work in Appendix~\ref{app:related-work}. 

A watermark may satisfy the aforementioned properties for one architecture yet fail miserably for another, as successful schemes often exploit idiosyncrasies in the sampling procedure \citep{fernandez_mod}. Thus, \textit{as architectures evolve, so should watermarks}. Researchers have proposed watermarks for autoregressive large language models (LLMs) \citep{aaronson2022watermark,kirchenbauer_water} and image diffusion models \citep{wen_tree}, but there is little work for discrete diffusion---a notable gap given their rapid growth in both research and commercial use \citep{GoogleDeepMind2025GeminiDiffusion,labs2025mercuryultrafastlanguagemodels}. To address this, we introduce one of the first watermarking methods for discrete diffusion models. Our contributions are as follows.
\begin{enumerate}
    \item We propose a distribution-preserving watermarking method for discrete diffusion models that enables reliable detection with provably exponentially decaying false positive rates.
    \item We demonstrate our approach on the state-of-the-art LLaDA model \citep{nie_llada}, achieving high detectability while preserving text quality with minimal tuning compared to adapted autoregressive LLM watermarking methods.
\end{enumerate}

\section{Large Language Diffusion Models}\label{sec:llada}

We first briefly explain the LLaDA model introduced by \citet{nie_llada} as an exemplary discrete diffusion language model.
Consider a diffusion language model with vocabulary $\mathcal{V}$ that generates sequences of maximum length $d$. 
Let $M \in \mathcal{V}$ be the special \emph{mask} token for diffusion. 
The model
starts at $t=1$ with a sequence of just mask tokens and iteratively unmasks (``denoises'') them to get to the textual sequence by the end of diffusion at $t=0$.
At each timestep $t \in [0, 1]$, the model $p_\theta$ predicts the entire unmasked sequence and then re-masks the $td$ tokens with the lowest confidence score, until $t=0$, when no tokens remain unmasked.
Here, $p_\theta(\cdot\mid x_t)\in[0,1]^{d\times|\mathcal{V}|}$ represents the probability of a token $x_{t-\Delta t,i}\in\mathcal{V}$ at position $i$ 
at time $t-\Delta t$, conditioned on the sequence $x_t\in\mathcal{V}^d$ at time $t$, which has $t$ fraction of its tokens masked.
This gives us the probability of the sequence $x_{t-\Delta t}$ before masking as $p_{t-\Delta t|t}(x_{t-\Delta t}|x_t)=\prod_{i=1}^{d}\left[p_{\theta}(x_{t-\Delta t}^i|x_t)\right]_i,$
where $x_{t-\Delta t}^i$ is the token sampled at position $i$ from $p_\theta(\cdot\mid x_t)$ at step $t$. Please refer to \citet{nie_llada} for training details.%
\footnote{Note that \citet{nie_llada} also have a temperature parameter which we absorb into $p_\theta$.}

\section{Watermarking Scheme}

To motivate our scheme, consider the soft watermark for autoregressive LLMs proposed by \citet{kirchenbauer_water}. This method applies a bias $\delta$, favoring a fraction of the vocabulary $\gamma$ (called the ``green list'') during sampling, with the goal of inducing a statistically detectable excess of green-list tokens in the output. Note that this scheme does not directly translate from autoregressive LLMs to DDLMs. Firstly, unlike autoregressive generation, discrete diffusion samples tokens across a set of diffusion steps $S$, requiring the selection of a subset $S_W \subseteq S$ at which to apply watermarking. This introduces a completeness–distortion tradeoff: we observe empirically that as $|S_W|$ increases, completeness and distortion are both non-decreasing.  Secondly, \citet{kirchenbauer_water} seed the green-list construction using previously generated tokens, an approach that is incompatible with diffusion models where all tokens are generated concurrently rather than sequentially.

The nuisance of finding a suitable subset of diffusion steps for watermarking motivates a scheme that is unbiased at every step of the sampling process. In this case, $S_W=S$ is optimal---if accumulating bias that distorts the text is no longer a concern, we should watermark every step to maximize completeness. To do so, we take inspiration from \citet{aaronson2022watermark} that sample the token with the maximum value of $r^{\frac{1}{p}}$ such that $r \sim \text{Unif}[0,1]$ and $p$ is the probability of that token. In Alg. \ref{alg:algorithm}, we repeat this custom sampling at every sampling step. In Appendix~\ref{app:theory}, we prove two key theoretical guarantees: first, that the Gumbel-max sampling rule preserves the original token sampling distribution at every diffusion step, ensuring distortion-freeness (Theorem~\ref{thm:distortion-free}); second, the resulting detection statistic has an exponentially decaying false detection probability in the generated sequence length (Theorem~\ref{thm:prob-false-alarm}). Crucially, the method obviates the need for explicit optimization and fine-tuning of hyperparameters (i.e.\ $\gamma^*,\delta^*,S_W^*$) to achieve an appropriate detectability-distortion balance, which is model and benchmark dependent.

To address the lack of a sequential dependency structure in diffusion models, we seed the random number generator that produces $r_i$ by the position in the sequence. This way, during detection (Alg.~\ref{alg:algorithm-detect}), we can regenerate $r_i$ at each position $i$. We directly follow \citet{aaronson2022watermark} in computing a score $\frac{1}{L}\sum_{i=1}^{L}\ln(\frac{1}{1-r_i})$ where $L$ is the length of the generated sequence. For unwatermarked text, the normalized score is centered around 1 whereas for watermarked text, it is expected to be centered around a value exceeding some threshold that is greater than 1. To thwart prefix deletions that cause all of the seeding to be off by some offset, we follow \citet{kuditipudi_robust} and instead seed by $(s+i) \bmod m$ for some parameter $m \ll L$ and offset $s$. In detection (Alg. \ref{alg:algorithm-detect}), we iterate through all possible offsets and choose the alignment with the maximum normalized score.%
\footnote{This modification bolsters robustness, but note that it does change the distribution of $r$.}

\begin{figure}[t]
\centering

\begin{minipage}{0.48\textwidth}
\begin{algorithm}[H]
\caption{Generate watermarked text}
\label{alg:algorithm}
\textbf{Input:} Diffusion model $p_\theta$, wm parameter $m$ \\
\textbf{Output:} Output sequence $x \in \mathcal{V}^d$
\begin{algorithmic}[1]
\STATE Initialize masked sequence $x_1 \gets MM\ldots(\text{$d$ times})$
\FOR{diffusion steps $t \in \{1, 1-\Delta t, \ldots, 0\}$}
    \STATE Run model $p_{i,x} \gets [p_\theta(\cdot \mid x_t)]_{i,x}$
    \FOR{$i \in \{1,\ldots,d\}$}
        \STATE Draw $r_i \sim \text{Unif}[0,1]^{|\mathcal{V}|}$ with seed $i \bmod m$
        \STATE Compute $S_{i,x} \gets \frac{\ln r_{i,x}}{p_{i,x}}$
        \STATE Set $x_{t-\Delta t,i} \gets \arg\max_x S_{i,x}$
    \ENDFOR
    \STATE Mask the $td$ tokens with least $\{p_{i,x_{t-\Delta t,i}}\}_{i=1}^d$
\ENDFOR
\RETURN $x_0$
\end{algorithmic}
\end{algorithm}
\end{minipage}
\hfill
\begin{minipage}{0.48\textwidth}
\begin{algorithm}[H]
\caption{Detect watermark in text}
\label{alg:algorithm-detect}
\textbf{Input:} Tokens $x_0 \in \mathcal{V}^L$, threshold $\tau$, wm parameter $m$ \\
\textbf{Output:} Watermarked / Not watermarked
\begin{algorithmic}[1]
\FOR{$s \in \{0,1,\ldots,m-1\}$}
    \STATE $\Gamma \gets 0$
    \FOR{$i \in \{1,\ldots,L\}$}
        \STATE Draw $r_i \sim \text{Unif}[0,1]^{|\mathcal{V}|}$ with seed $(i+s) \bmod m$
        \STATE $\Gamma \gets \Gamma - \ln(1 - r_{i,x_{0,i}})$
    \ENDFOR
    \IF{$\Gamma/L > \tau$}
        \RETURN Watermarked
    \ENDIF
\ENDFOR
\RETURN Not watermarked
\end{algorithmic}
\end{algorithm}
\end{minipage}
\label{fig:algorithms}
\end{figure}

\section{Experimental Results}

We prompt LLaDA with open-ended prompts from WaterBench~\citep{tu2024waterbenchholisticevaluationwatermarks}, evaluating text quality using perplexity (via GPT-2~\citep{radford2019gpt2}) and GPT-4-as-a-judge scores (Appendix~\ref{app:prompt}). As a baseline, we adapt the green-list watermark to DDLMs (Appendix~\ref{app:rg-bias}), conducting an ablation over $\gamma \in \{0.1, 0.25, 0.5, 0.75, 0.9\}$, $\delta \in \{0.5, 1, 2, 4, 8\}$, and watermarking intervals $S_W = \{t_{\mathrm{start}} = 1,, t_{\mathrm{end}}\}$ with $t_{\mathrm{end}} \in \{5, 10, 20, 40, 80, 160, 300\}$ (Appendix~\ref{app:rg-results}). Consistent with \citet{kirchenbauer_water}, we consider a watermark to be detectable if the green-list $z$-score is greater than 4. To identify the hyperparameters that balance completeness and distortion-freeness, for a held-out set of prompts, we select the $(\gamma, \delta, S_W)$ that maximizes the GPT-4-as-a-judge score such that $\mathbb{P}(z \ge 4) \ge 1 - \beta$, where $\beta \in \{0.01, 0.05, 0.10, 0.15\}$. By contrast, the Gumbel-max approach we introduce in Alg.\ \ref{alg:algorithm} is far less computationally expensive, as the only hyperparameter to tune is the detection threshold $\tau$.\footnote{Tuning $S_W$ is unnecessary for the Gumbel-max approach as the watermark is unbiased at each step (Theorem \ref{thm:distortion-free}). We show the selected $S_W$ has no impact on judge scores in Fig.\ \ref{fig:aaronson-by-step}.} A $\tau$ that is too large decreases completeness (high Type II error $\beta$) whereas a $\tau$ that is too small decreases soundness (high Type I error $\alpha$). Using a held-out set of prompts, we choose $\tau^*=1.19$ as it maximizes $1-\beta$ such that $\alpha \leq 0.01$ (Fig. \ref{fig:normalized_score_threshold_analysis}).

\begin{table}[t]
\centering
\small
\begin{tabular}{l l cc}
\hline
\textbf{Method} & \textbf{Hyperparameters} & \textbf{Completeness} & \textbf{Soundness} \\
\hline

Green-List ($\beta=0.15$)
& $\gamma^*=0.25,\ \delta^*=4.0,\ t_{start}=1,\ t_{end}^*=80$
& 80.6\% 
& 97.2\% \\

Green-List ($\beta=0.10$)
& $\gamma^*=0.25,\ \delta^*=8.0,\ t_{start}=1,\ t_{end}^*=20$ 
& 94.0\% 
& 97.2\% \\

Green-List ($\beta=0.05$)
& $\gamma^*=0.1,\ \delta^*=8.0,\ t_{start}=1,\ t_{end}^*=20$  
& 98.3\% 
& 97.2\% \\

Green-List ($\beta=0.01$)
& $\gamma^*=0.1,\ \delta^*=8.0,\ t_{start}=1,\ t_{end}^*=20$ 
& 98.3\%  
& 97.2\% \\

Gumbel-max 
& $\tau^*=1.19$ 
& \textbf{96.0\%}
& \textbf{97.7\%} \\
\hline
\end{tabular}
\caption{Detectability performance of watermarking methods on 1000 WaterBench prompts. The hyperparameter selection method from the held-out set is specified in parenthesis. The Gumbel-max method attains comparable completeness and soundness compared to green-list baselines which necessitate extensive hyperparameter tuning.}
\label{tab:watermark_detectability}
\end{table}

\begin{table}[t]
\centering
\small
\setlength{\tabcolsep}{4pt}
\begin{tabular}{l ccccccc}
\hline
\textbf{Method} 
& \textbf{PPL} 
& \textbf{Style} 
& \textbf{Consistency} 
& \textbf{Accuracy} 
& \textbf{Ethics} 
& \textbf{Avg. Score} \\
\hline
No Watermark 
& 10.391 
& 8.508 
& 8.954 
& 7.364 
& 8.474 
& 8.325 \\

Green-List ($\beta=0.15$)
& 12.195 
& 8.457 
& 8.782 
& 7.044 
& 8.320 
& 8.151 \\

Green-List ($\beta=0.10$)
& 10.485 
& 8.302 
& 8.574 
& 6.978 
& 8.290 
& 8.036 \\

Green-List ($\beta=0.05$)
& 9.878 
& 7.399 
& 7.484 
& 6.214 
& 7.782 
& 7.220 \\

Green-List ($\beta=0.01$)
& 9.878 
& 7.399 
& 7.484 
& 6.214 
& 7.782 
& 7.220 \\

Gumbel-max 
& 10.652 
& 7.922 
& 7.849 
& 6.712 
& 7.904 
& 7.597 \\
\hline
\end{tabular}
\caption{Generation quality evaluated for the same 1000 WaterBench prompts and hyperparameters as in Table~\ref{tab:watermark_detectability}. Green-list methods exhibit increasing quality degradation as detectability improves. The Gumbel-max approach incurs only a modest reduction in generation quality.}
\label{tab:watermark_quality}
\end{table}

\begin{figure}[t]
\centering

\begin{minipage}{0.48\linewidth}
  \centering
  \includegraphics[width=\linewidth]{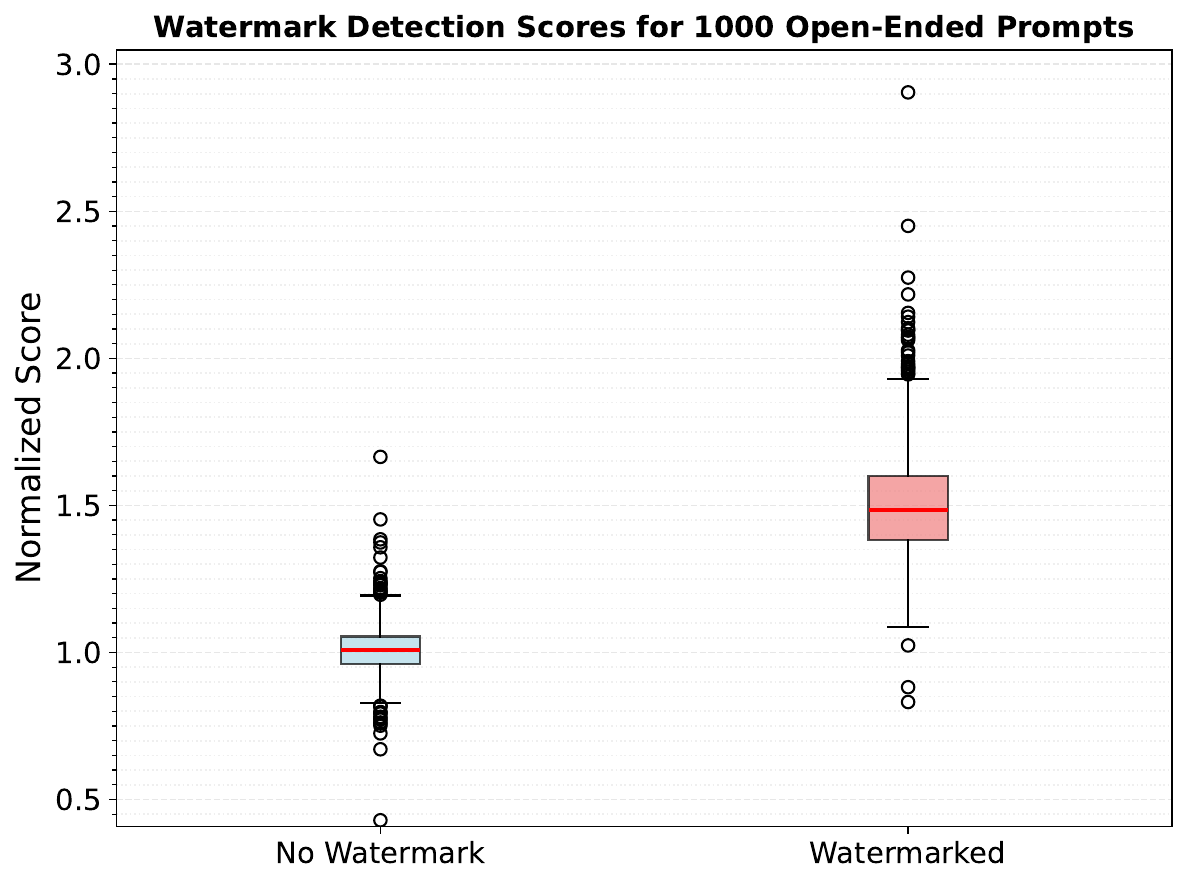}
  \caption{Distribution of normalized detection scores for unwatermarked and watermarked text using the Gumbel-max scheme. Results are shown for 1000 WaterBench prompts, illustrating a clear statistical difference.}
  \label{fig:watermark_boxplot}
\end{minipage}
\hfill
\begin{minipage}{0.48\linewidth}
  \centering
  \includegraphics[width=\linewidth]{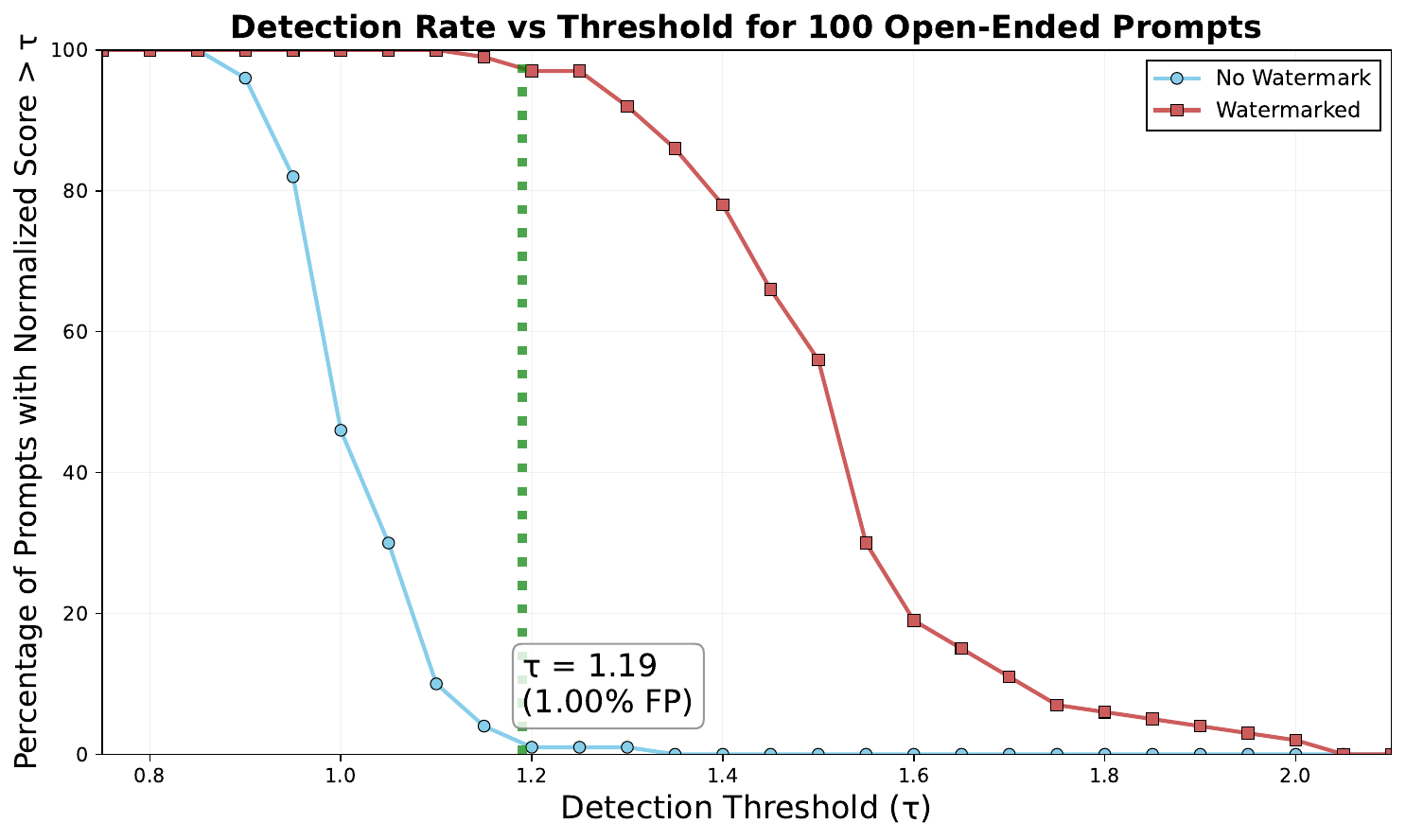}
  \caption{Percentage of open-ended prompts exceeding threshold $\tau$ for varying $\tau$, shown for unwatermarked and watermarked text. Results are computed on a held-out set of 100 WaterBench prompts, illustrating the soundness--completeness trade-off.}
  \label{fig:normalized_score_threshold_analysis}
\end{minipage}

\end{figure}

The green-list watermark is detectable but it exhibits sensitivity to the hyperparameter selection strategy. Table~\ref{tab:watermark_detectability} and Table~\ref{tab:watermark_quality} indicate that increasing completeness (i.e.\ decreasing $\beta$) for green-list watermarking generally degrades text quality, with the $\beta=0.10$ configuration providing the most favorable distortion--detectability trade-off. The Gumbel-max method also achieves strong detectability, with Fig.~\ref{fig:watermark_boxplot} highlighting a clear statistical separation between the normalized detection score distributions of watermarked and unwatermarked text. Specifically, the Gumbel-max watermark achieves 96.0\% completeness and 97.7\% soundness (Table~\ref{tab:watermark_detectability}) with $\tau^*$. Moreover, it maintains detectability under prefix deletions (Appendix~\ref{app:robustness}). Compared to the $\beta=0.10$ green-list baseline, it offers stronger detectability with only a modest reduction in text quality (Table~\ref{tab:watermark_quality}) while requiring substantially simpler tuning. 

\section{Discussion and Conclusion}

We introduce a DDLM watermarking scheme and demonstrate its completeness empirically, as well as its soundness and distortion-freeness both theoretically and empirically. Our method is also robust to prefix deletions, maintaining strong detectability. Unlike green-list-based approaches, which require expensive ablations to balance detectability and distortion, our method avoids hyperparameter search and thus enables straightforward deployment across models and benchmarks. One promising direction to improve the watermarked text quality using our method is more deeply considering the re-masking strategy and concentrating watermark signal in regions that are more stable under the reverse diffusion dynamics. There is also scope to explore edit-distance-based alignment techniques~\citep{kuditipudi_robust} that provide robustness guarantees beyond prefix deletions, while preserving distortion-freeness guarantees of our method. Future work must also evaluate our method against contemporaneous  approaches such as \citet{gloaguen2025watermarkingdiffusionlanguagemodels}. 

\bibliography{iclr2026_conference}
\bibliographystyle{iclr2026_conference}

\appendix

\section{Appendix}

\subsection{Related Work}\label{app:related-work}

\textbf{Discrete Diffusion Models:} 
Diffusion models gradually corrupt data into a simple distribution and then learn a reverse process to reconstruct the original data \citep{sohl-dickstein15_thermo, ho_denoising}. Although earlier work focused on learning continuous data, diffusion model methods have been adapted for categorical domains. Unlike their autoregressive counterparts that capture left-to-right dependencies, discrete diffusion models generate tokens in parallel, enabling faster inference \citep{wang_faster}, greater controllability \citep{schiff_simple}, and enhanced comprehension of global patterns \citep{hu_global}. While the first discrete diffusion models mapped discrete tokens into continuous space before applying traditional continuous diffusion \citep{li_dlm, gong_difseq, han_ssd}, more recent models operate directly on the discrete state space \citep{hoogeboom_argmax, austin_struct, meng_score, lou_sedd, nie_llada}. Inception Labs’ models empirically validate discrete diffusion at scale across text and multimodal domains \citep{labs2025mercuryultrafastlanguagemodels, li2025lavidalargediffusionlanguage}.

\textbf{Watermarking Generative Models:} A watermark should satisfy the following properties:
\begin{itemize}
    \item \textbf{Soundness}: The detection scheme reliably identifies unwatermarked content as unwatermarked \citep{CiC-1-4-31, christ_water, fan_two}.
    \item \textbf{Completeness}: The detection scheme reliably identifies watermarked content as watermarked \citep{CiC-1-4-31, christ_water}.
    \item \textbf{Distortion-Freeness}: The watermark does not significantly reduce the quality of the text \citep{CiC-1-4-31, christ_water, giboulot_tradeoff}. Completeness and distortion-freeness are often inversely related. 
    \item \textbf{Robustness}: The watermark is still detectable following bounded modifications to the text \citep{CiC-1-4-31, giboulot_tradeoff, li_robust, kuditipudi_robust, fan_two}.
\end{itemize}
In continuous settings (images, audio, etc.), watermarking is well-studied as one can exploit known signal processing properties to achieve a robust fingerprint \citep{wen_tree, katzenbeisser_book}. Accordingly, existing watermarking approaches for diffusion models \citep{wen_tree, liu2023watermarkingdiffusionmodel, liu2024grootgeneratingrobustwatermark, zhao2023recipewatermarkingdiffusionmodels, huang2025robinrobustinvisiblewatermarks} focus almost exclusively on continuous modalities. Naive modifications to adapt such watermarks to discrete settings, however, yield incomprehensible and undetectable outputs. \citet{kirchenbauer_water} introduce a watermark for models generating discrete data (i.e., autoregressive LLMs) which biases the sampling procedure to favor a subset of the vocabulary (i.e., the ``green list'') with the hope that there is a statistically greater number 
of tokens from the green list in the output. More recently, \citet{weiji_su} introduce a hypothesis-testing framework to systematically evaluate and compare different watermark detection methods. \citet{he_theory} proposed jointly optimizing both watermark embedding and detection, detailing the trade-off between detectability and distortion---a balance we also empirically reveal in this paper. Contemporaneously to the present work, \citet{gloaguen2025watermarkingdiffusionlanguagemodels} address this tradeoff by formulating DDLM watermarking as a constrained optimization problem, applying green-list watermarking in expectation over unknown context during generation.

\subsection{Theoretical Results}\label{app:theory}

In this section, we first analyze the scheme in Alg.~\ref{alg:algorithm} and prove its distortion-free property
based on the results from \citet{aaronson2022watermark} in Theorem~\ref{thm:distortion-free}.
We then show that the probability of false detection by Alg.~\ref{alg:algorithm-detect} 
exponentially goes to $0$ as a function of the number of tokens
in the sequence in Theorem~\ref{thm:prob-false-alarm}.

\begin{theorem}
    Given a diffusion language model $p_\theta$, the output text of 
    Alg.~\ref{alg:algorithm} has the same distribution as that of the
    unwatermarked language model $p_\theta$ if the effects of the pseudorandom seed are negligible.
    \label{thm:distortion-free}
\end{theorem}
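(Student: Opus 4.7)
The plan is to reduce the claim to the single-position, single-step case and then propagate independence across positions and across diffusion steps. Under the assumption that the pseudorandom seeds behave like fresh independent $\mathrm{Unif}[0,1]^{|\mathcal{V}|}$ draws, the seeding becomes irrelevant and we only need to argue that lines \ref{alg:step:start}--\ref{alg:step:end} of Alg.~\ref{alg:algorithm} implement an exact sample from $p_\theta(\cdot\mid x_t)_i$.

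The core lemma to establish is the Aaronson exponential-sampling identity: for fixed probabilities $p=(p_x)_{x\in\mathcal{V}}$ and independent $r_x\sim\mathrm{Unif}[0,1]$,
\begin{equation*}
    \Pr\!\left[\arg\max_{x\in\mathcal{V}}\tfrac{\ln r_x}{p_x}=x^\ast\right]=p_{x^\ast}.
\end{equation*}
I would prove this by the usual change of variables $U_x:=-\ln r_x\sim\mathrm{Exp}(1)$, which turns $\arg\max_x (\ln r_x)/p_x$ into $\arg\min_x U_x/p_x$. Since $U_x/p_x\sim\mathrm{Exp}(p_x)$ independently across $x$, the minimum is achieved at $x^\ast$ with probability $p_{x^\ast}/\sum_y p_y = p_{x^\ast}$. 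This shows that for each position $i$, the token $x_{t-\Delta t,i}$ produced in line~\ref{alg:step:end} is distributed according to $[p_\theta(\cdot\mid x_t)]_i$, i.e., the same marginal as unwatermarked sampling.

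Next I would promote the per-position statement to the full step. Because (under the negligible-seed-effect assumption) the vectors $r_i$ are drawn independently across $i$, the sampled tokens $x_{t-\Delta t,1},\ldots,x_{t-\Delta t,d}$ are conditionally independent given $x_t$, and their joint conditional law factorizes exactly as in \eqref{eq:prob_at_t}. The subsequent re-masking step depends only on the confidences $p_{i,x_{t-\Delta t,i}}$ derived from $p_\theta$ and not on any watermark state, so it coincides with the unwatermarked re-masking. Induction on the diffusion steps $t=1,1-\Delta t,\ldots,0$, with base case the deterministic fully-masked $x_1$, then yields that the law of $x_0$ equals the unwatermarked generative distribution.

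The main obstacle is the caveat in the hypothesis, namely that positions sharing a residue class modulo $m$ are seeded identically and therefore share the random vector $r$. Strictly speaking this breaks the independence used in the joint-distribution step, since tokens at positions $i\equiv j\pmod m$ are then coupled through a common $r$. The theorem sidesteps this by assuming the pseudorandom seed effects are negligible; I would make this precise by noting that if the CSPRNG outputs at distinct seeds are indistinguishable from independent $\mathrm{Unif}[0,1]^{|\mathcal{V}|}$ draws, and if $m$ is chosen much smaller than $d$ (as recommended) so that seed collisions within a step are rare relative to the model's own stochasticity, then the induced deviation from \eqref{eq:prob_at_t} is negligible in total variation. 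Everything else in the argument is routine given the Aaronson identity.
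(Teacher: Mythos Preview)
Your proposal is correct and follows essentially the same route as the paper: both reduce the theorem to the single-token Gumbel--max identity under the assumption that the seeded $r_i$ behave as fresh independent uniforms, with the re-masking step left unchanged. The only cosmetic difference is that the paper proves $\Pr[\arg\max_x (\ln R_x)/p_x = y] = p_y$ by directly integrating $\int_0^1 r_y^{(1-p_y)/p_y}\,dr_y = p_y$, whereas you substitute $U_x = -\ln R_x \sim \mathrm{Exp}(1)$ and invoke the minimum-of-independent-exponentials fact; your explicit propagation across positions and diffusion steps is also more detailed than the paper's, which stops after the identity.
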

\begin{proof}
    At each time $t\in[0,1]$, the (unwatermarked) model samples the output according to $p_\theta(\cdot\mid x_t)$ (see Section \ref{sec:llada})
    before masking them and proceeding to the next step $t-\Delta t$.
    In our watermarking scheme, this is replaced with Steps 5-7
    of Alg.~\ref{alg:algorithm}.
    If we assume the effects of the pseudorandom seed repeating after every $m$ positions is negligible,
    we can follow the Gumbel-max derivation of \citet{aaronson2022watermark} to show that the output
    has the same distribution as the equation in Section~\ref{sec:llada} even in Alg.~\ref{alg:algorithm}.
    We have a set of $|\mathcal{V}|$ i.i.d. $\text{Unif}[0,1]$ variables and we need to show that
    picking $\arg\max_y\frac{\ln R_y}{p_y}$ has the same distribution as $p_y$ for $y\in\{1,2,\ldots,|\mathcal{V}|\}$.
    Writing out the expressions, we get
    \begin{align*}
        \mathbb{P}(Y=y) &= \mathbb{P}\left(\frac{\ln{R_y}}{p_y} \ge \frac{\ln R_z}{p_z}\ \forall\ z\neq y\right) \\
        &= \int_0^1 \prod_{z\neq y}r_y^{p_z/p_y}dr_y \tag{$\{R_z\}$ independent} \\
        &= \int_0^1 r_y^{\frac{1-p_y}{p_y}}dr_y = p_y,
    \end{align*}
    which concludes the proof.
\end{proof}

\begin{theorem}
    Alg.~\ref{alg:algorithm-detect}  with parameter $\tau=1+\zeta$ applied on unwatermarked text
    satisfies
    \begin{align*}
        \mathbb{P}&\left(\text{Watermark detection}\mid \text{Unwatermarked text}\right) \le m\exp\left(-L(\zeta-\ln(1+\zeta))\right),
    \end{align*}
    where $L$ is the number of tokens in the sequence.
    \label{thm:prob-false-alarm}
\end{theorem}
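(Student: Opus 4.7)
The plan is a standard union-bound-plus-Chernoff argument. I union bound over the $m$ candidate offsets in the outer loop of Alg.~\ref{alg:algorithm-detect} and, for each offset, invoke a Cram\'er--Chernoff bound on a sum of i.i.d.\ exponentials. Consistent with the footnote following Alg.~\ref{alg:algorithm-detect}, I treat the per-position pseudorandom vectors $r_i\sim\text{Unif}[0,1]^{|\mathcal V|}$ as mutually independent, suppressing the $(i+s)\bmod m$ repetition.

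For a fixed offset $s$, I treat the unwatermarked sequence $x_0$ as deterministic, which is permitted because by assumption $x_0$ was produced independently of the pseudorandom draws. Then $R_i := r_{i,x_{0,i}}$ are i.i.d.\ $\text{Unif}[0,1]$, so by the standard quantile transformation $Y_i := -\ln(1-R_i)$ are i.i.d.\ Exp$(1)$. The per-offset score is $\Gamma_s = \sum_{i=1}^L Y_i$, and the detection event for offset $s$ is $\{\Gamma_s/L > 1+\zeta\}$.

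Next I apply a Chernoff bound using the Exp$(1)$ moment generating function $\mathbb{E}[e^{\lambda Y_i}] = 1/(1-\lambda)$, valid for $\lambda<1$. Optimizing the exponential tilt at $\lambda^\star = \zeta/(1+\zeta)$ recovers the Cram\'er rate function $I(1+\zeta) = \zeta - \ln(1+\zeta)$, giving
\begin{align*}
    \mathbb{P}\bigl(\Gamma_s/L > 1+\zeta\bigr) \le \exp\bigl(-L(\zeta-\ln(1+\zeta))\bigr).
\end{align*}
A union bound across the $m$ offsets contributes the multiplicative factor $m$ and yields the stated inequality.

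The main obstacle is not the Chernoff computation but the i.i.d.\ reduction: in the full scheme the seeding $(i+s)\bmod m$ forces $r_i = r_{i+m}$ as vectors, so within a single offset the $R_i$'s at positions congruent modulo $m$ are coordinates of a common uniform vector (and even coincide whenever $x_{0,i}=x_{0,i+m}$). The footnote declares this effect negligible; a fully rigorous treatment would likely partition the $L$-sum into $m$ subsums indexed by residue class, apply a Chernoff bound within each class conditional on the pattern of repeated tokens, and then aggregate. The union bound and the MGF optimization themselves are entirely routine.
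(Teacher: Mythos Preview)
Your proposal is correct and follows essentially the same approach as the paper: a Chernoff/Cram\'er bound on the per-offset score (the paper computes $\int_0^1(1-r)^{-\lambda}dr$ directly rather than naming the Exp$(1)$ MGF, but this is the same calculation), optimized at $\lambda=\zeta/(1+\zeta)$, followed by a union bound over the $m$ offsets. Your explicit discussion of the seed-repetition caveat is slightly more careful than the paper, which simply invokes the footnote and treats the $r_{i,x}$ as i.i.d.\ without further comment.
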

\begin{proof}
    We use a familiar Chernoff bound to prove this.
    When the tokens are unwatermarked, the variables $r_{i,x}$ are all i.i.d.\
    $\text{Unif}[0,1]$ random variables since we are using different seeds for every $i$,
    for a given $s$.
    This gives us for some i.i.d.\ $R_i\sim\text{Unif}[0,1]$:
    \begin{align*}
        \mathbb{P}&\left(\text{Detect}\mid \text{No wm}, s\right) 
        = \mathbb{P}\left(-\frac{1}{L}\sum_{i=1}^L\ln(1-R_i) > 1+\zeta\right) \\
        &= \mathbb{P}\left(\exp\left(\lambda\sum_{i=1}^L-\ln(1-R_i)\right)>\exp(L\lambda(1+\zeta))\right) \tag{for $\lambda>0$}\\
        &\le \exp\left(-L\lambda(1+\zeta)\right)\mathbb{E}\left[\prod_{i=1}^L\frac{1}{(1-R_i)^\lambda}\right],
    \end{align*}
    where the final expression follows from Markov inequality.
    Using the independence of $\{R_i\}$, this reduces to
    \begin{align*}
        \mathbb{P}&\left(\text{Detect}\mid \text{No wm}, s\right) \le
        \exp\left(-L\lambda(1+\zeta)\right)\left(\int_0^1\!\!\!\frac{1}{(1-r)^\lambda}dr\right)^L.
    \end{align*}
    Since the above is true for any $\lambda>0$, we can choose the $\lambda$ that minimizes the
    expression. For $0<\lambda<1$, the integral simplifies to
    \begin{align*}
        \mathbb{P}(\text{Detect}\mid\text{No wm}, s) \le \frac{1}{(1-\lambda)^L}\exp\left(-L\lambda(1+\zeta)\right).
    \end{align*}
    Minimizing this gives us $\lambda=\frac{\zeta}{1+\zeta}$, which results in
    \begin{align*}
        \mathbb{P}(\text{Detect}\mid\text{No wm}, s) \le \exp\left(-L(\zeta-\ln(1+\zeta))\right).
    \end{align*}
    Applying a union bound over all $s\in\{0,1,\ldots,m-1\}$ concludes the proof.
\end{proof}

\subsection{Robustness}\label{app:robustness}

\begin{table}[H]
\centering
\small
\begin{tabular}{c c c}
\hline
\textbf{Parameter $m$} & \textbf{Avg. Judge score} & \textbf{Completeness (\%)} \\
\hline
No Watermark & 7.0910 & 2.2\% \\
2 & 5.9595 & 98.4\% \\
3 & 6.3340 & 96.4\% \\
5 & 6.4815 & 98.4\% \\
7 & 6.7380 & 98.6\% \\
10 & 6.7850 & 98.4\% \\
\hline
\end{tabular}
\caption{Generation quality (average GPT4-as-a-judge score) and watermark completeness (percentage of samples with $\tau \geq 1.19$) across robustness parameter values $m$ (Alg.~\ref{alg:algorithm-detect}) on 500 WaterBench prompts. Generated tokens are subjected to random prefix deletions. Completeness remains high for all $m \geq 2$. As $m$ increases, seed reuse becomes less frequent causing the joint distribution of generated tokens to more closely match that of the unwatermarked model, explaining the monotonic increase in quality metrics.}
\label{tab:judge_completeness_vs_m}
\end{table}

\subsection{Green-List Watermark Method}\label{app:rg-bias}

In the main exposition, we presented an unbiased scheme to watermark the sampling procedure of discrete diffusion models. In this section, we instead directly bias the sampling procedure to favor a subset of the vocabulary. To do so, we follow \citet{kirchenbauer_water} by partitioning the vocabulary into a green and red list of sizes $\gamma|V|$ and $|V|-\gamma|V|$ respectively. We seed the partition by the position in the sequence $i$ to maintain a diversity in the tokens being sampled (i.e. the green list changes at each index in the sequence) and so that a receiver aware of the scheme can reconstruct the green list at each position. In Algorithm \ref{alg:rg-mask}, we generate a green mask of size $\gamma|V|$ uniformly at random, again following \citet{kuditipudi_robust} by using a modulus parameter to thwart prefix deletions.  

\begin{algorithm}[h]
\caption{Generate green mask}
\label{alg:rg-mask}
\textbf{Input:} Sequence length $L$, vocabulary size $|\mathcal{V}|$, green list size $\gamma$, modulus $m$, starting offset $s$ \\
\textbf{Output:} Mask $G \in \{0,1\}^{L \times |\mathcal{V}|}$
\begin{algorithmic}[1]
\STATE Initialize empty matrix $G \in \{0,1\}^{L \times |\mathcal{V}|}$
\FOR{$i \in \{1...L\}$}
    \STATE Choose each $G[i][j]$ uniformly at random with seed $(s+i) \bmod m$ such that $\sum_{j=1}^{|V|}G[i][j]=\gamma|V|$
\ENDFOR
\RETURN $G$
\end{algorithmic}
\end{algorithm}

Taking inspiration from the soft watermark in \citet{kirchenbauer_water}, we add a bias to favor the green logits during sampling (Algorithm \ref{alg:rg-generate}, Figure \ref{fig:llada-diagram}). Where $l=\operatorname{logit} (p_{t-\Delta t \mid t}(x_{t-\Delta t}^i \mid x_t^i))$,
\begin{align}
p'_{t-\Delta t \mid t}(x_{t-\Delta t}^i \mid x_t^i) =
\operatorname{Softmax}\!\bigl(l + \delta \cdot \mathbf{1}_{\{x_{t-\Delta t}^i \in G\}}\bigr).
\end{align}

\begin{algorithm}[tb]
\caption{Generate watermarked text}
\label{alg:rg-generate}
\textbf{Input:} Diffusion  model $p_\theta$, wm  parameter $m$ \\
\textbf{Output:} Output sequence $x\in \mathcal{V}^d$
\begin{algorithmic}[1]
\STATE Initialize masked sequence $x_1\gets MM\ldots(\text{$d$ times})$
\STATE $G=\operatorname{GenerateGreenMask}(L,|V|, \gamma,m,0)$
\FOR{diffusion steps $t \in \{1, 1-\Delta t, 1-2\Delta t\ldots0\}$}
        \STATE Run model $p_{i,x}\gets \left[p_\theta(\cdot\mid x_t)\right]_{i,x}\ \forall i\in\{1,\ldots,d\}$
        \STATE $l=\operatorname{logit} (p_{t-\Delta t \mid t}(x_{t-\Delta t}^i \mid x_t^i))$
        \STATE $x_{t-\Delta t,i} \sim \operatorname{Softmax}\!\bigl(l + \delta \cdot G[i][x_{t-\Delta t}^i])$
        \STATE Mask the $td$ tokens with least $\{p_{i,x_{t-\Delta t,i}}\}_{i=1}^d$ 
\ENDFOR
\RETURN $x_0$
\end{algorithmic}
\end{algorithm}

During detection in Algorithm \ref{alg:rg-detect}, we regenerate the green mask at every position, calculating whether the proportion of green tokens is statistically significant (i.e. z-score).  

\begin{algorithm}[tb]
\caption{Detect watermark in text}
\label{alg:rg-detect}
\textbf{Input:} Tokens $x_0 \in \mathcal{V}^L$, threshold $\tau=4$, wm parameter $m$\\
\textbf{Output:} Watermarked / Not wm
\begin{algorithmic}[1]
\FOR{$s \in \{0, 1, \ldots, m-1\}$}
    \STATE $\mathrm{count} \leftarrow 0$
    \STATE $G=\operatorname{GenerateGreenMask}(L,|V|,\gamma,m,s)$
    \FOR{$i\in \{1,2,\ldots, L\}$}
        \IF{$G[i][x_{0,i}]==1$}
            \STATE $\mathrm{count}++$
        \ENDIF
    \ENDFOR
    \STATE $z=\frac{\mathrm{count}-\gamma L}{\sqrt{\gamma L(1-\gamma)}}$
    \IF{$z > \tau$}
        \RETURN Watermarked
    \ENDIF
\ENDFOR
\RETURN Not Watermarked
\end{algorithmic}
\end{algorithm}

Note that unlike the unbiased approach, our choice of hyperparameters now affects the tradeoff between detectability and distortion-freeness. Specifically, a high $\delta, |S_W|$ (where $S_W \subseteq S$ is the set of watermarked steps) increases detectability but further distorts the text. Thus, we must perform a sweep over possible $\gamma$, $\delta$, and $S_W$. Let $D$ be the resultant dataset and $A$ the event that the watermark detector Algorithm \ref{alg:rg-detect} returns true. The objective is as follows where $G$ is the number of recovered green tokens. Soundness and completeness lower bounds are chosen at $95\%$ and $z\geq4$ respectively to match prior work.

\begin{equation}
\begin{aligned}
\min_{\gamma, \delta, S_W} \quad & \text{Perplexity} \\
\text{s.t.} \quad 
& z = \frac{G - \gamma T}{\sqrt{T \gamma (1 - \gamma)}} \ge 4, \quad
0 < \gamma < 1, \quad \delta > 0, \\
& S_W \subseteq S=[1,\dots,T], \quad
S_W = \{t_{\text{start}}, \dots,t_{\text{end}}\}, \\
& 1 \le t_{\text{start}} \le t_{\text{end}} \le T, \quad
\mathbb{P}_{x \sim D_{\delta=1}}[A] > 0.95
\end{aligned}
\end{equation}

\begin{figure}[t]
\centering
\fbox{%
\includegraphics[width=0.95\columnwidth]{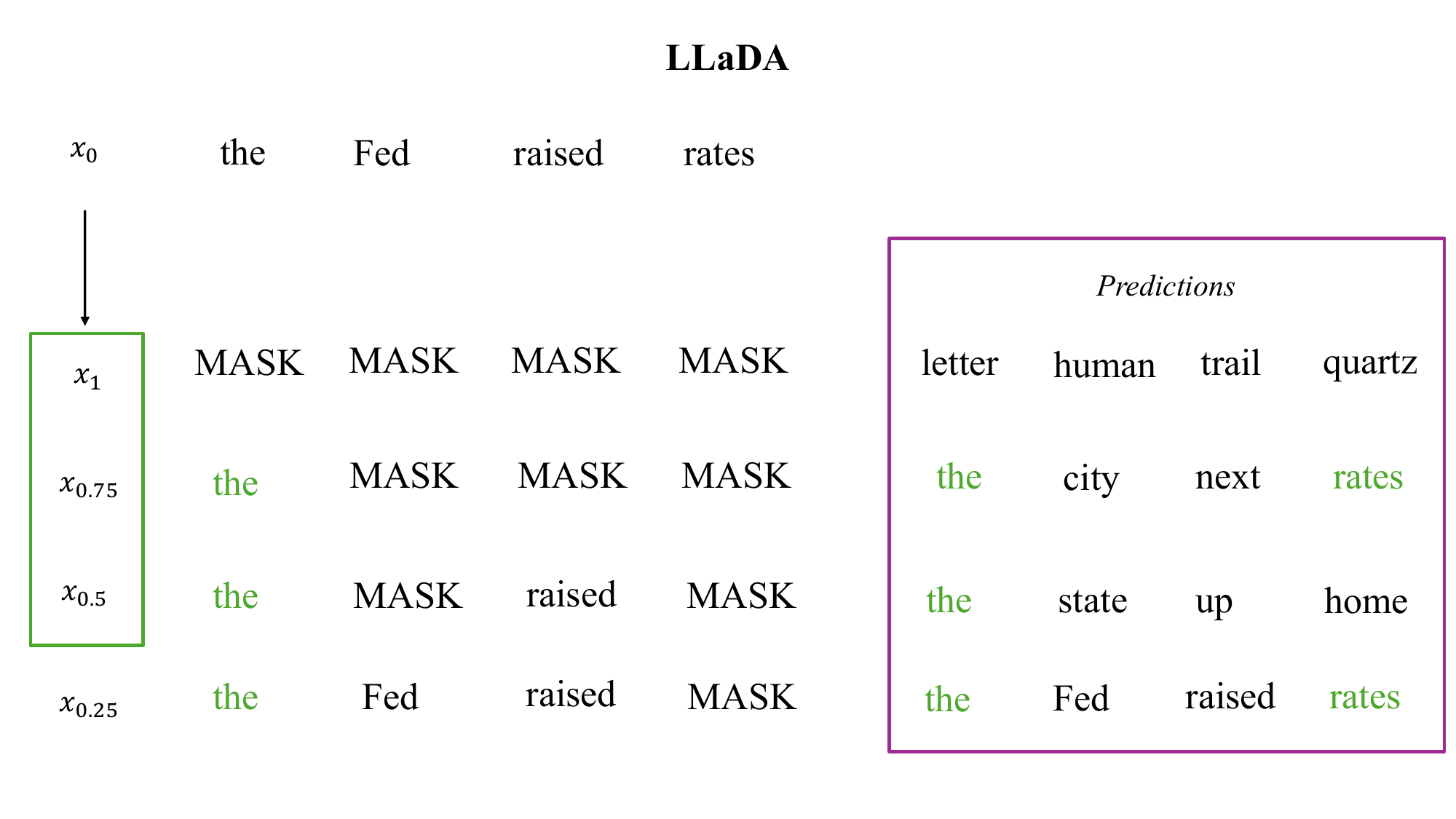}} 
\caption{The forward process converges to an all masked state. LLaDA predicts the entire sequence at each step and then re-masks the $t|V|$ tokens with lowest confidence. Only $M \to v \neq M$ transitions occur in sampling.}
\label{fig:llada-diagram}
\end{figure}

\subsection{Green-List Watermark Results}\label{app:rg-results}

This section reports an ablation over the following parameters for LLaDA:
\[
\gamma \in \{0.1, 0.25, 0.5, 0.75, 0.9\}, \quad
\delta \in \{0.5, 1, 2, 4, 8\}, \quad
t_{\mathrm{end}} \in \{5, 10, 20, 40, 80, 160, 300\},
\]
with step intervals $S_W = \{t_{\mathrm{start}} = 1,\, t_{\mathrm{end}}\}$.
Figures~\ref{fig:fixdelta05_tradeoff}--\ref{fig:fixdelta8_tradeoff} fix $\delta$ and vary
$t_{\mathrm{end}}$, illustrating that watermarking more steps (larger $t_{\mathrm{end}}$)
generally increases detectability ($z$-score) but can degrade generation quality (judge score),
with the trade-off most pronounced for larger $\gamma$.
Conversely, Figures~\ref{fig:fixsteps5_tradeoff}--\ref{fig:fixsteps300_tradeoff} fix
$t_{\mathrm{end}}$ and vary $\delta$, showing that increasing $\delta$ yields stronger detection
but typically at the cost of reduced judge score (and higher distortion).
Overall, higher $\delta$ and larger watermarking ranges tend to improve completeness, while the
associated distortion diminishes as $t_{\mathrm{end}}$ approaches the final diffusion step and the
sequence approaches a fixed state.

\begin{figure*}[t]
  \centering
  \begin{subfigure}{0.48\textwidth}
    \includegraphics[width=\linewidth]{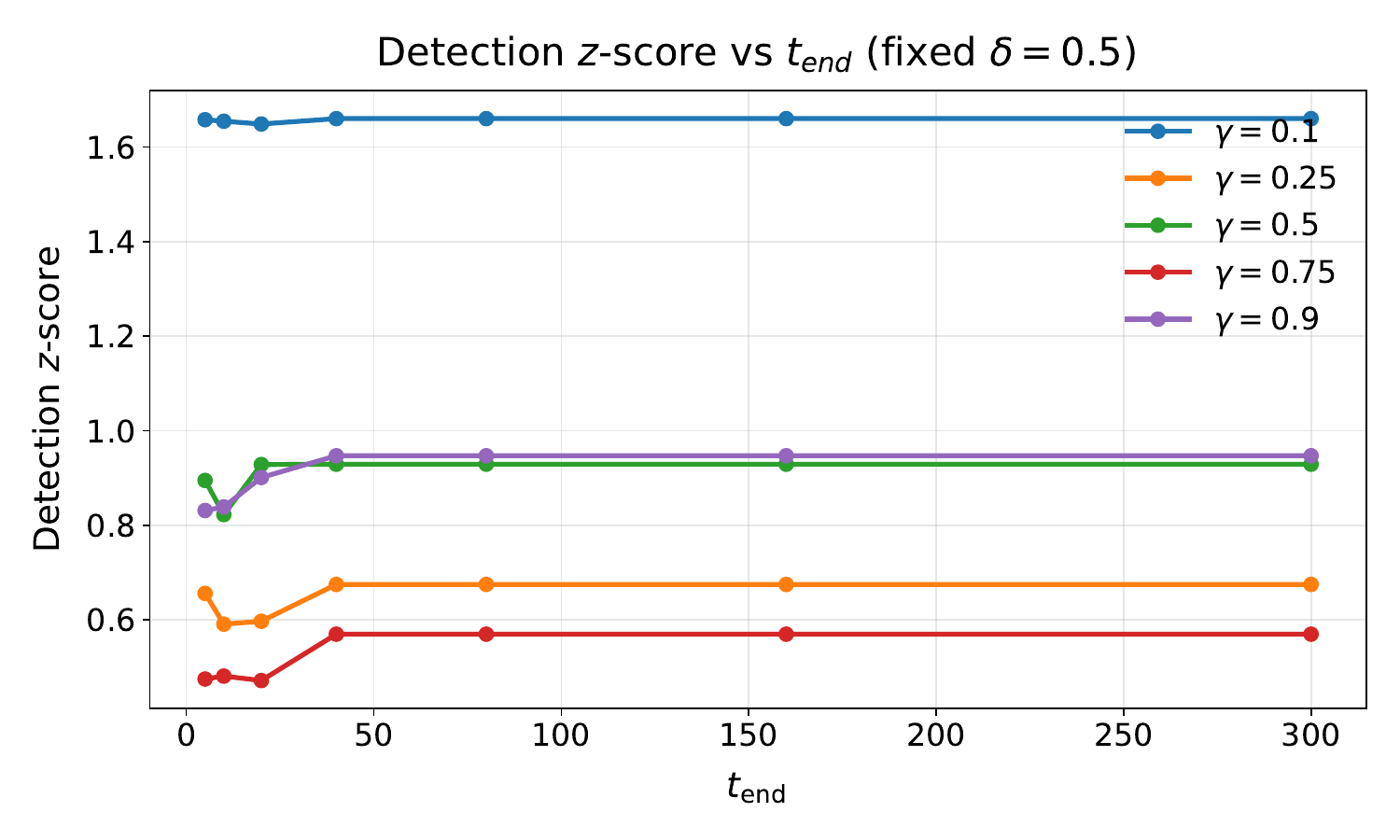}
    \caption*{Detection $z$-score}
  \end{subfigure}\hfill
  \begin{subfigure}{0.48\textwidth}
    \includegraphics[width=\linewidth]{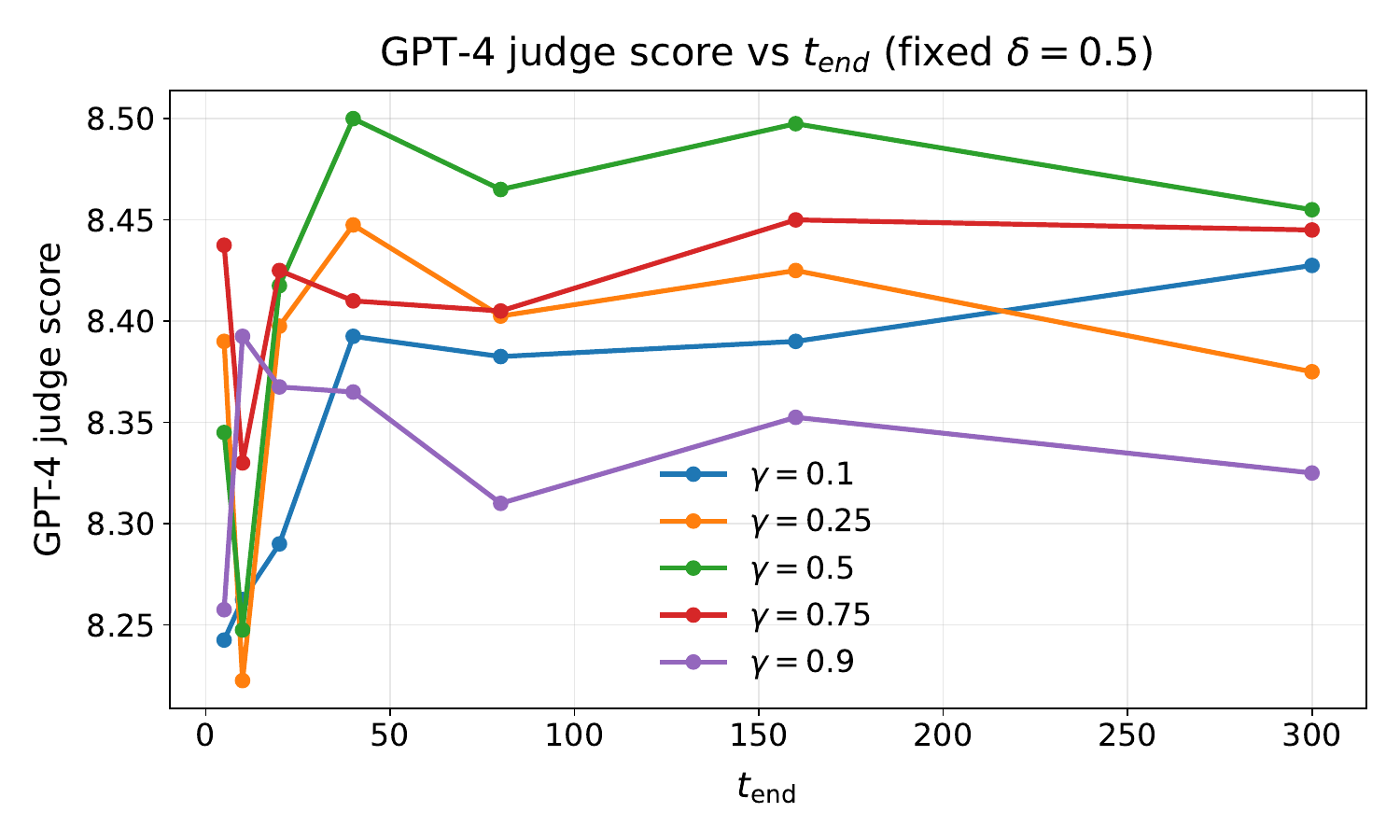}
    \caption*{GPT-4 judge score}
  \end{subfigure}
  \caption{(Green-List watermark) Detection strength vs.\ generation quality for fixed $\delta=0.5$. Performance is shown as a function of $t_{\mathrm{end}}$, with lines corresponding to different $\gamma$.}
  \label{fig:fixdelta05_tradeoff}
\end{figure*}

\begin{figure*}[t]
  \centering
  \begin{subfigure}{0.48\textwidth}
    \includegraphics[width=\linewidth]{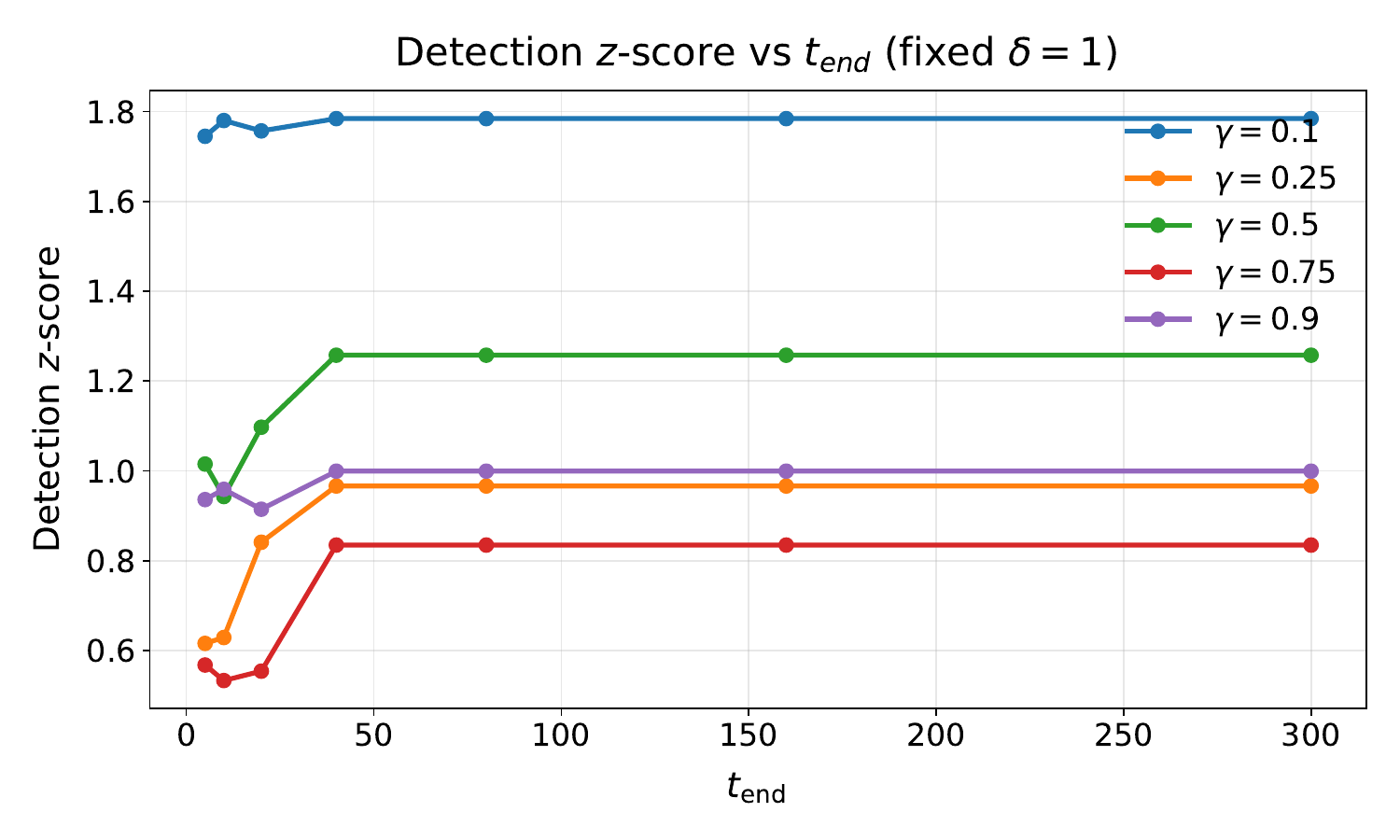}
    \caption*{Detection $z$-score}
  \end{subfigure}\hfill
  \begin{subfigure}{0.48\textwidth}
    \includegraphics[width=\linewidth]{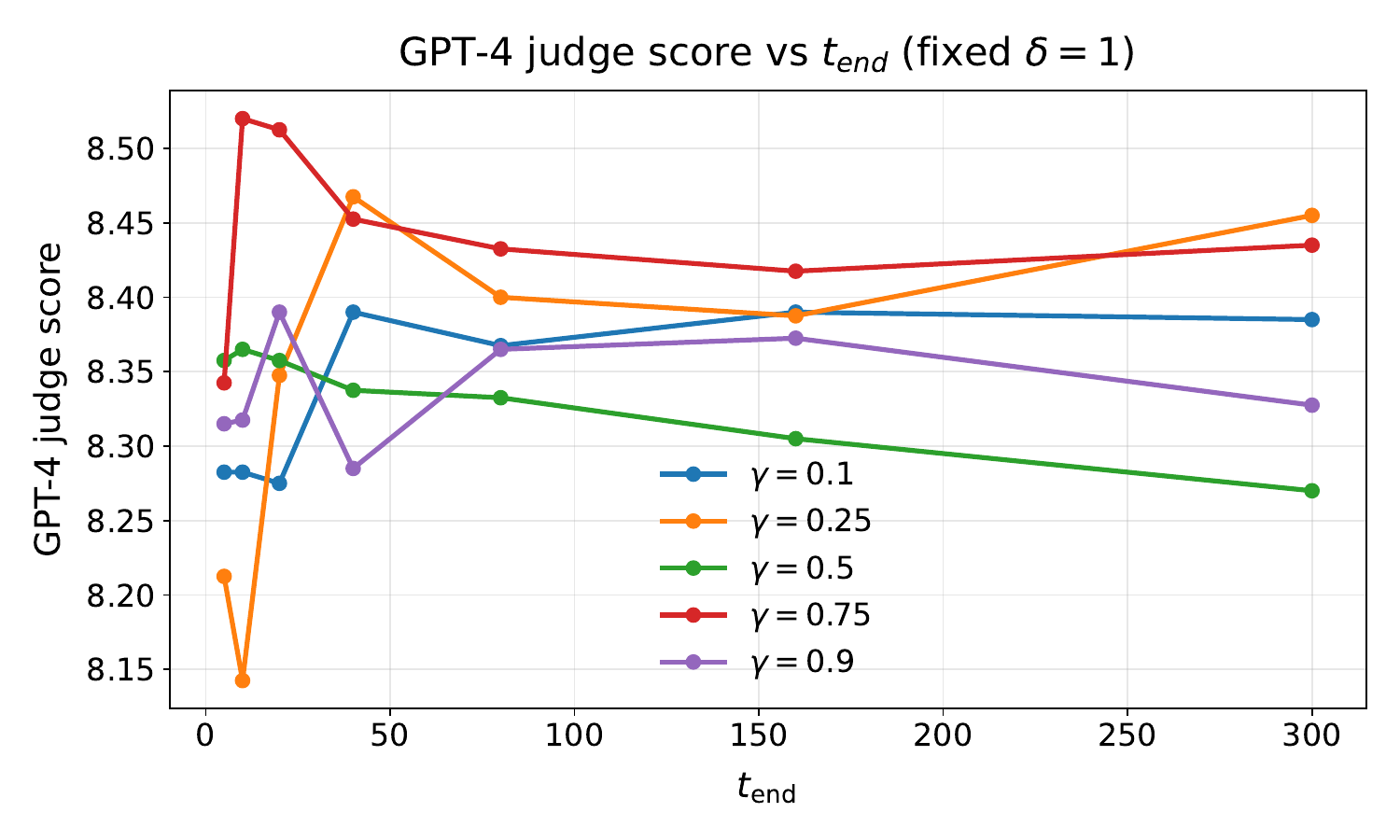}
    \caption*{GPT-4 judge score}
  \end{subfigure}
  \caption{(Green-List watermark) Detection strength vs.\ generation quality for fixed $\delta=1$. Performance is shown as a function of $t_{\mathrm{end}}$, with lines corresponding to different $\gamma$.}
  \label{fig:fixdelta1_tradeoff}
\end{figure*}

\begin{figure*}[t]
  \centering
  \begin{subfigure}{0.48\textwidth}
    \includegraphics[width=\linewidth]{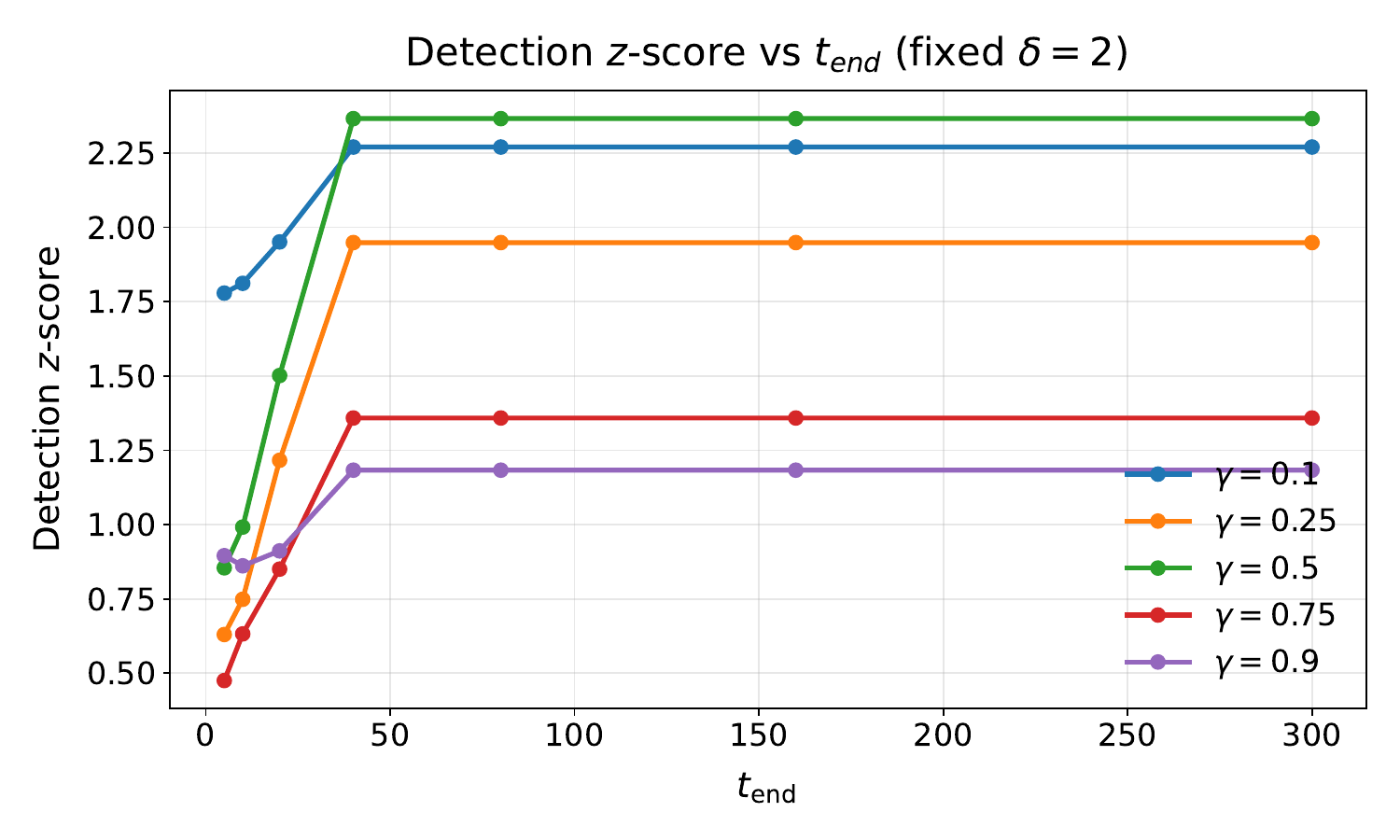}
    \caption*{Detection $z$-score}
  \end{subfigure}\hfill
  \begin{subfigure}{0.48\textwidth}
    \includegraphics[width=\linewidth]{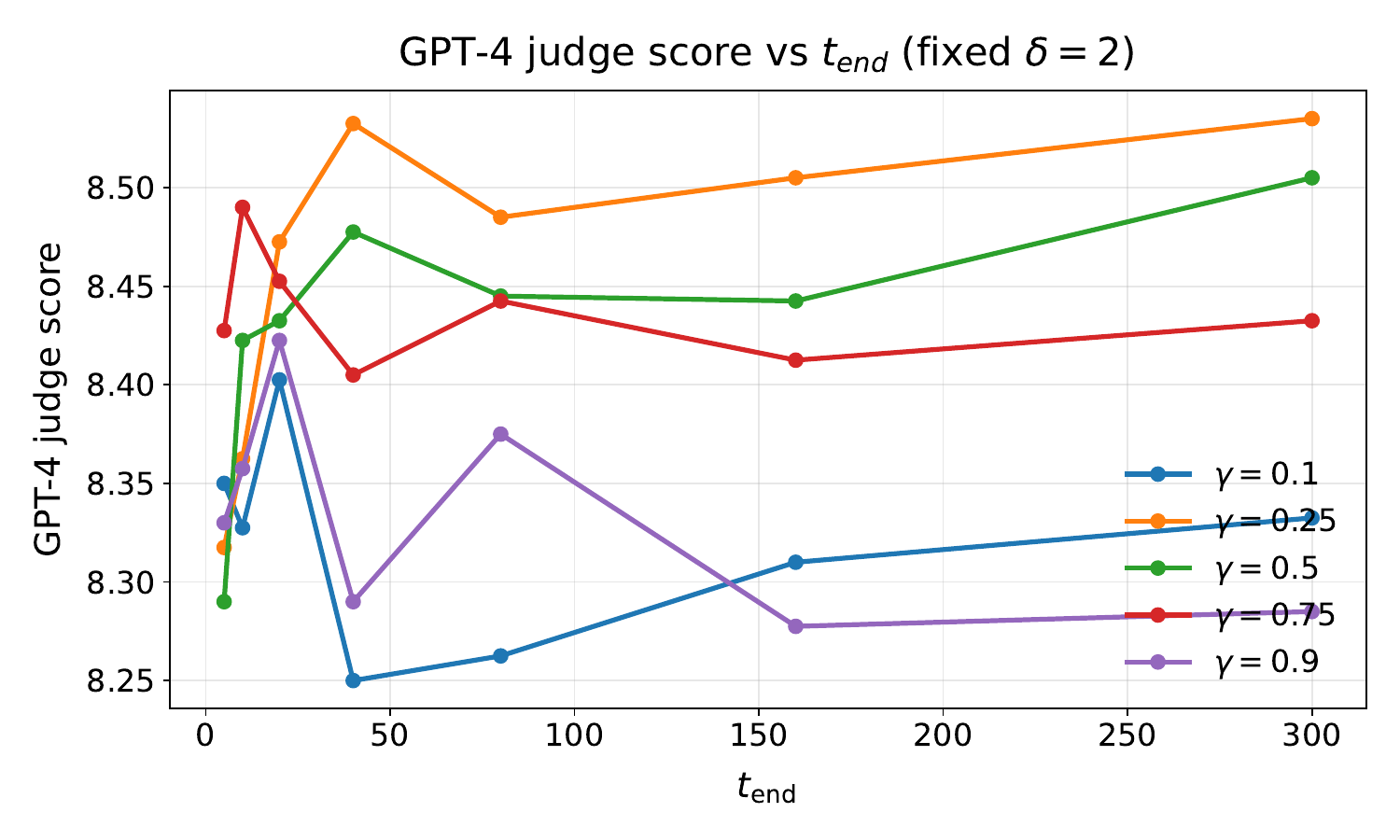}
    \caption*{GPT-4 judge score}
  \end{subfigure}
  \caption{(Green-List watermark) Detection strength vs.\ generation quality for fixed $\delta=2$. Performance is shown as a function of $t_{\mathrm{end}}$, with lines corresponding to different $\gamma$.}
  \label{fig:fixdelta2_tradeoff}
\end{figure*}

\begin{figure*}[t]
  \centering
  \begin{subfigure}{0.48\textwidth}
    \includegraphics[width=\linewidth]{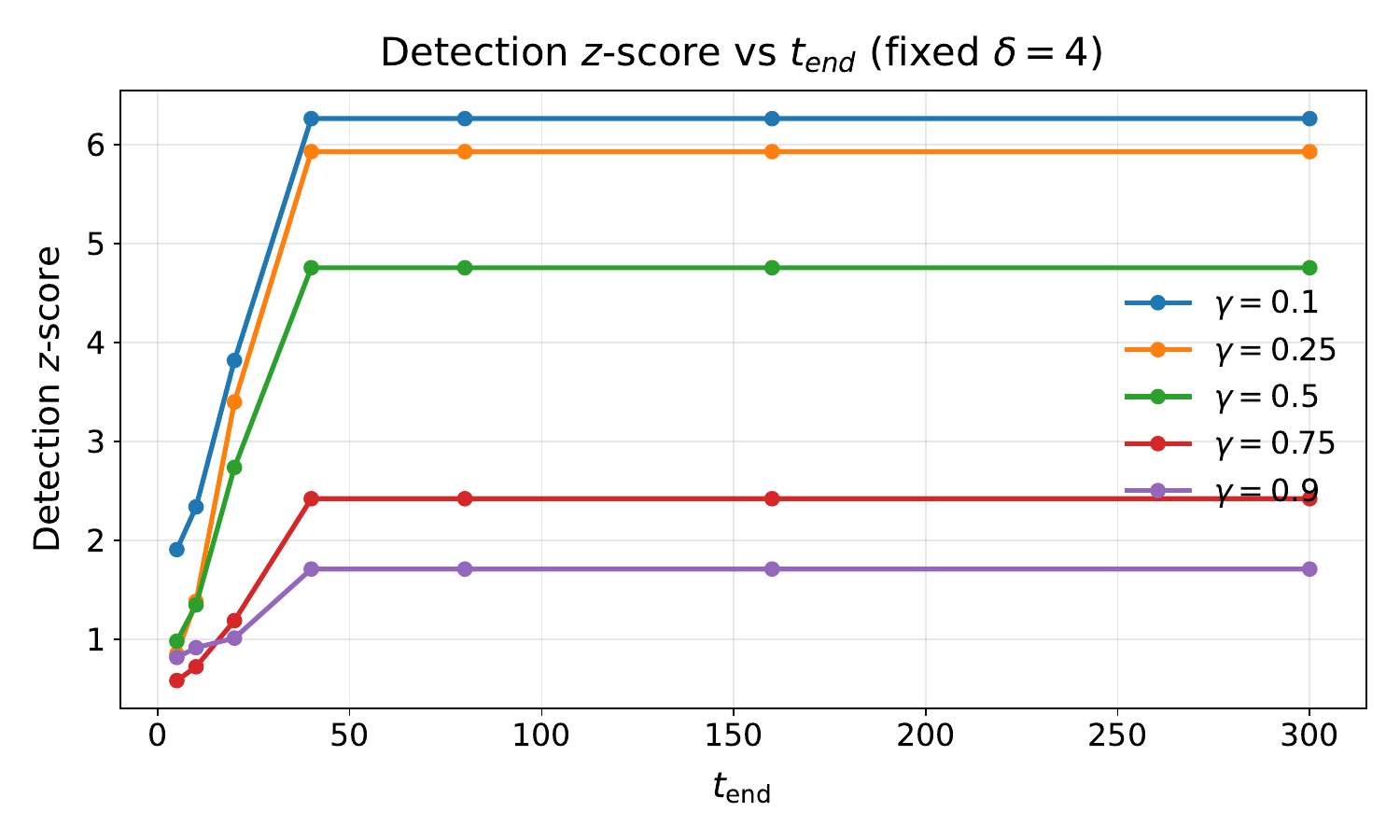}
    \caption*{Detection $z$-score}
  \end{subfigure}\hfill
  \begin{subfigure}{0.48\textwidth}
    \includegraphics[width=\linewidth]{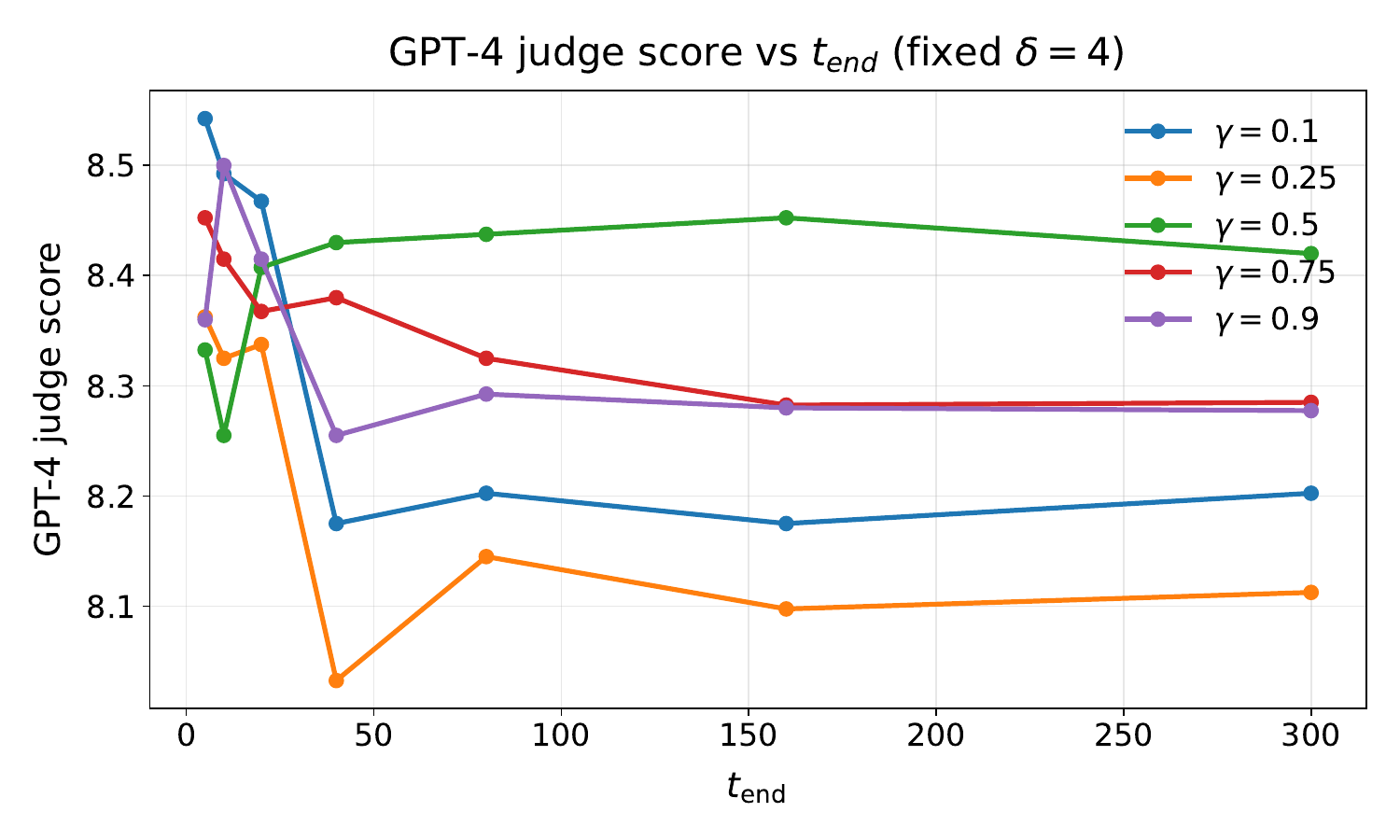}
    \caption*{GPT-4 judge score}
  \end{subfigure}
  \caption{(Green-List watermark) Detection strength vs.\ generation quality for fixed $\delta=4$. Performance is shown as a function of $t_{\mathrm{end}}$, with lines corresponding to different $\gamma$.}
  \label{fig:fixdelta4_tradeoff}
\end{figure*}

\begin{figure*}[t]
  \centering
  \begin{subfigure}{0.48\textwidth}
    \includegraphics[width=\linewidth]{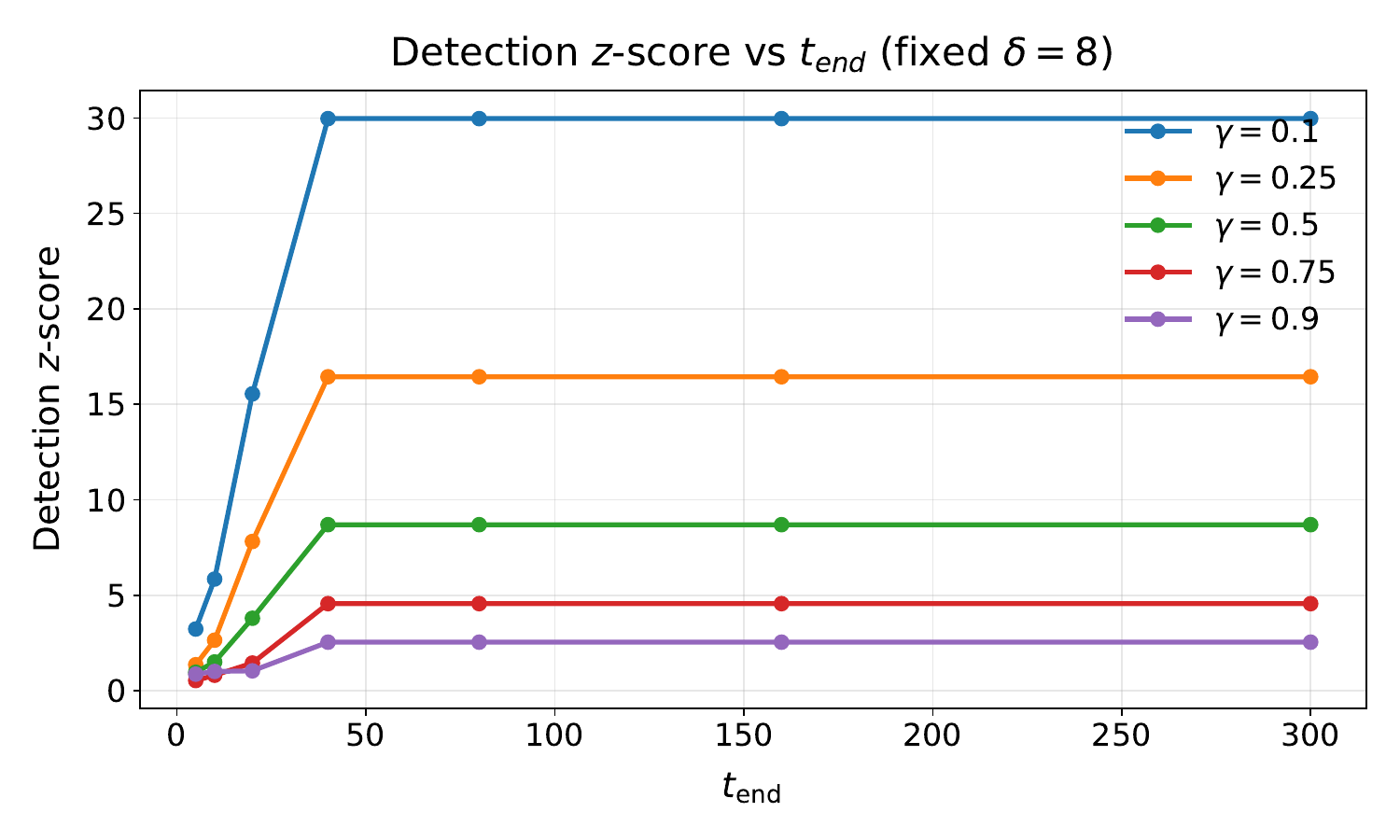}
    \caption*{Detection $z$-score}
  \end{subfigure}\hfill
  \begin{subfigure}{0.48\textwidth}
    \includegraphics[width=\linewidth]{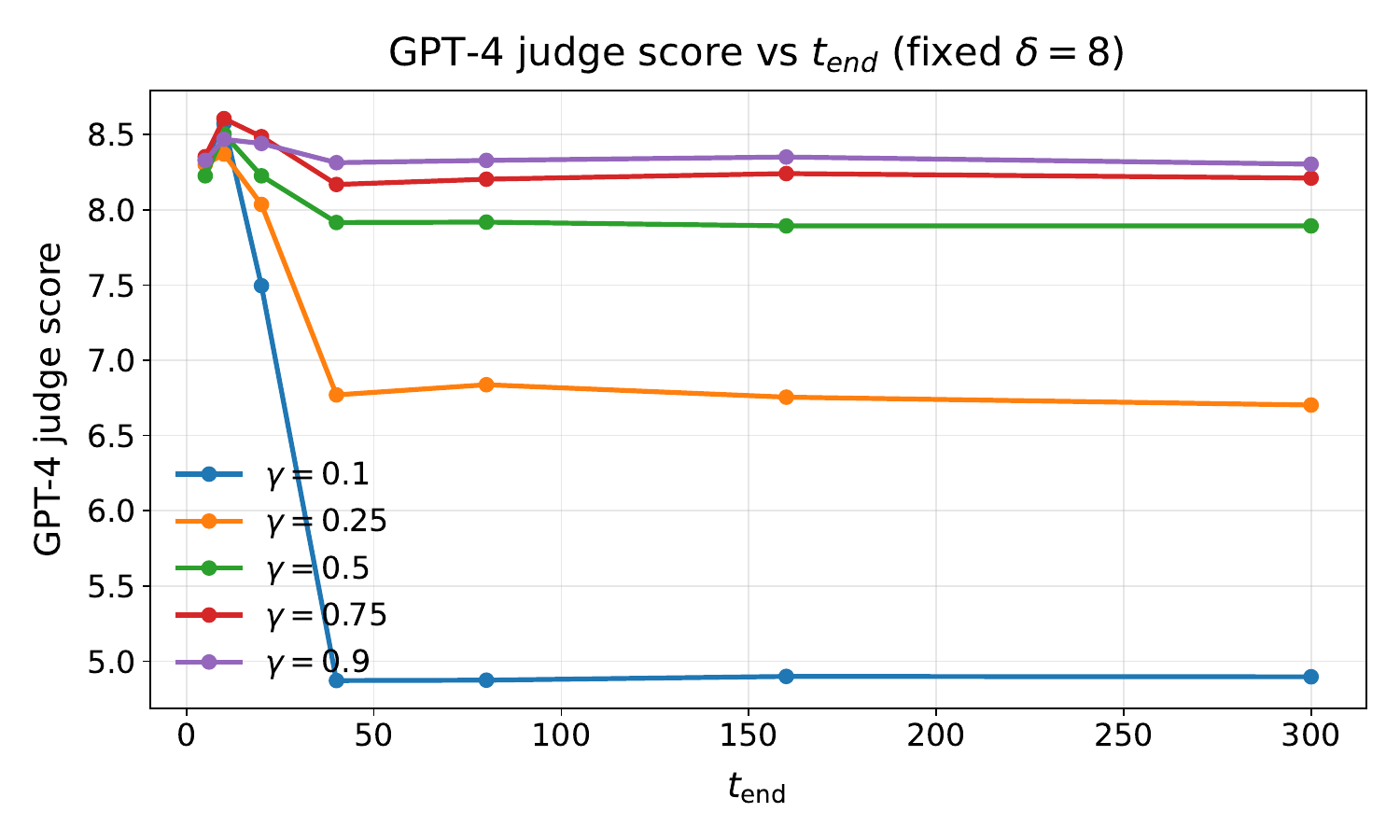}
    \caption*{GPT-4 judge score}
  \end{subfigure}
  \caption{(Green-List watermark) Detection strength vs.\ generation quality for fixed $\delta=8$. Performance is shown as a function of $t_{\mathrm{end}}$, with lines corresponding to different $\gamma$.}
  \label{fig:fixdelta8_tradeoff}
\end{figure*}

\begin{figure*}[t]
  \centering
  \begin{subfigure}{0.48\textwidth}
    \includegraphics[width=\linewidth]{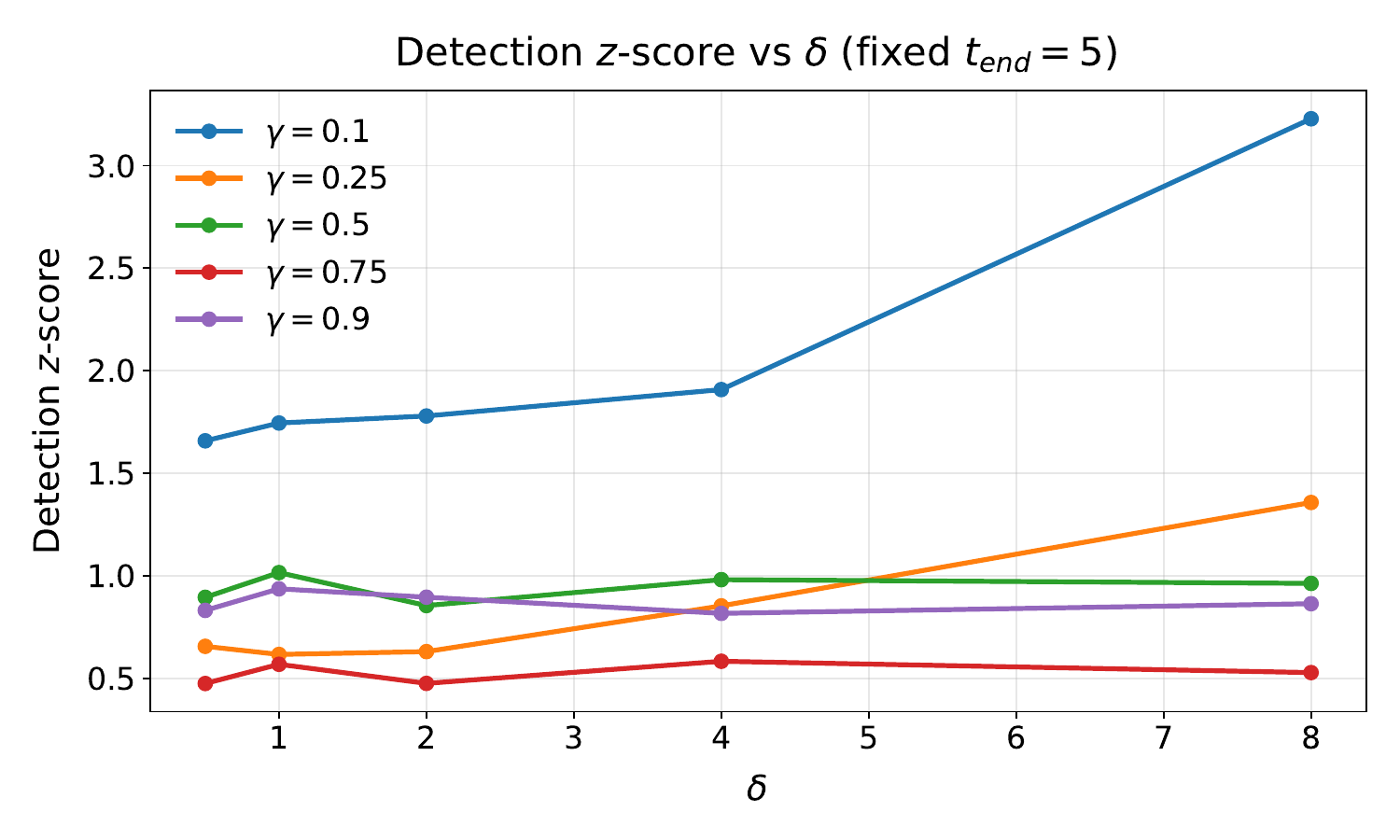}
    \caption*{Detection $z$-score}
  \end{subfigure}\hfill
  \begin{subfigure}{0.48\textwidth}
    \includegraphics[width=\linewidth]{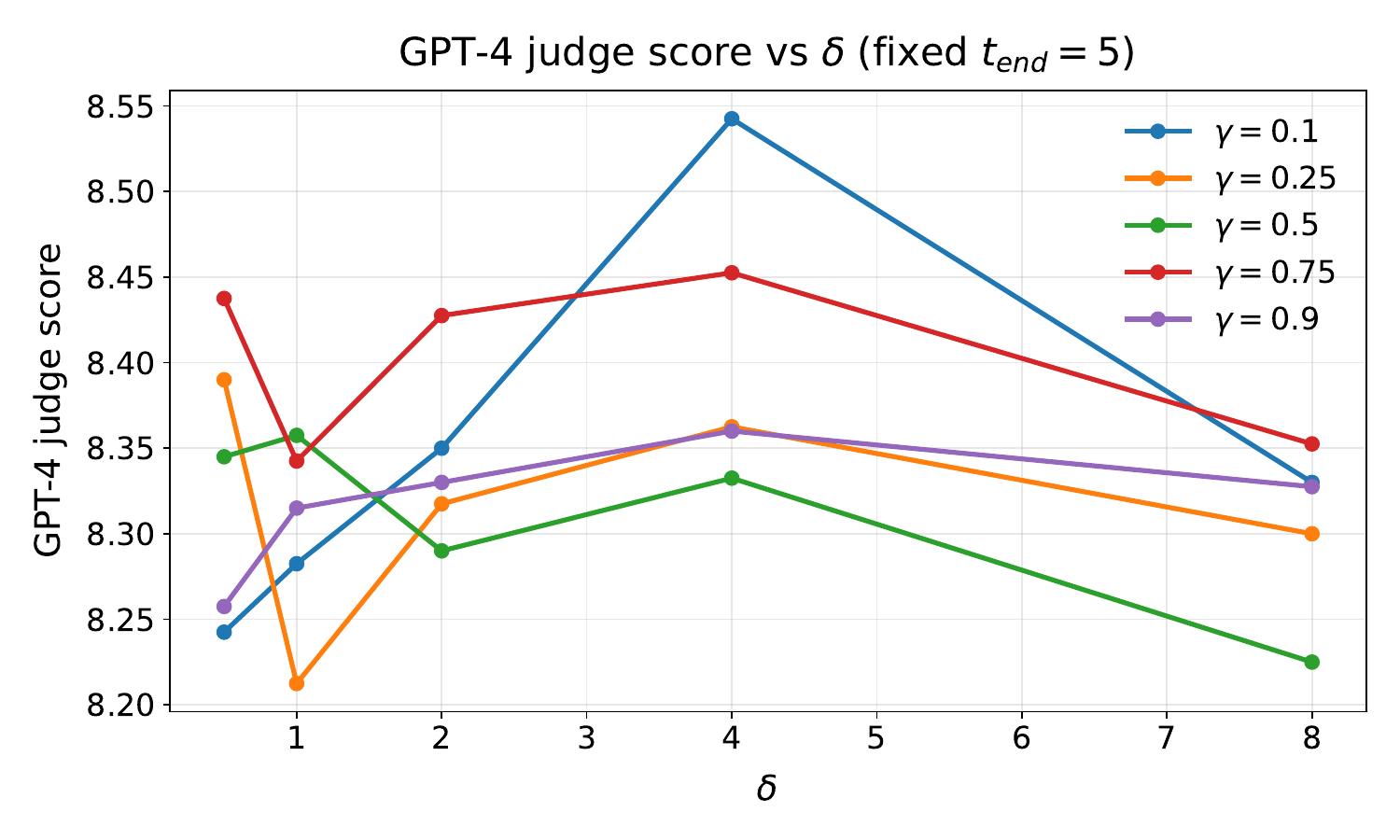}
    \caption*{GPT-4 judge score}
  \end{subfigure}
  \caption{(Green-List watermark) Detection strength vs.\ generation quality for fixed $t_{\mathrm{end}}=5$. Performance is shown as a function of $\delta$, with lines corresponding to different $\gamma$.}
  \label{fig:fixsteps5_tradeoff}
\end{figure*}

\begin{figure*}[t]
  \centering
  \begin{subfigure}{0.48\textwidth}
    \includegraphics[width=\linewidth]{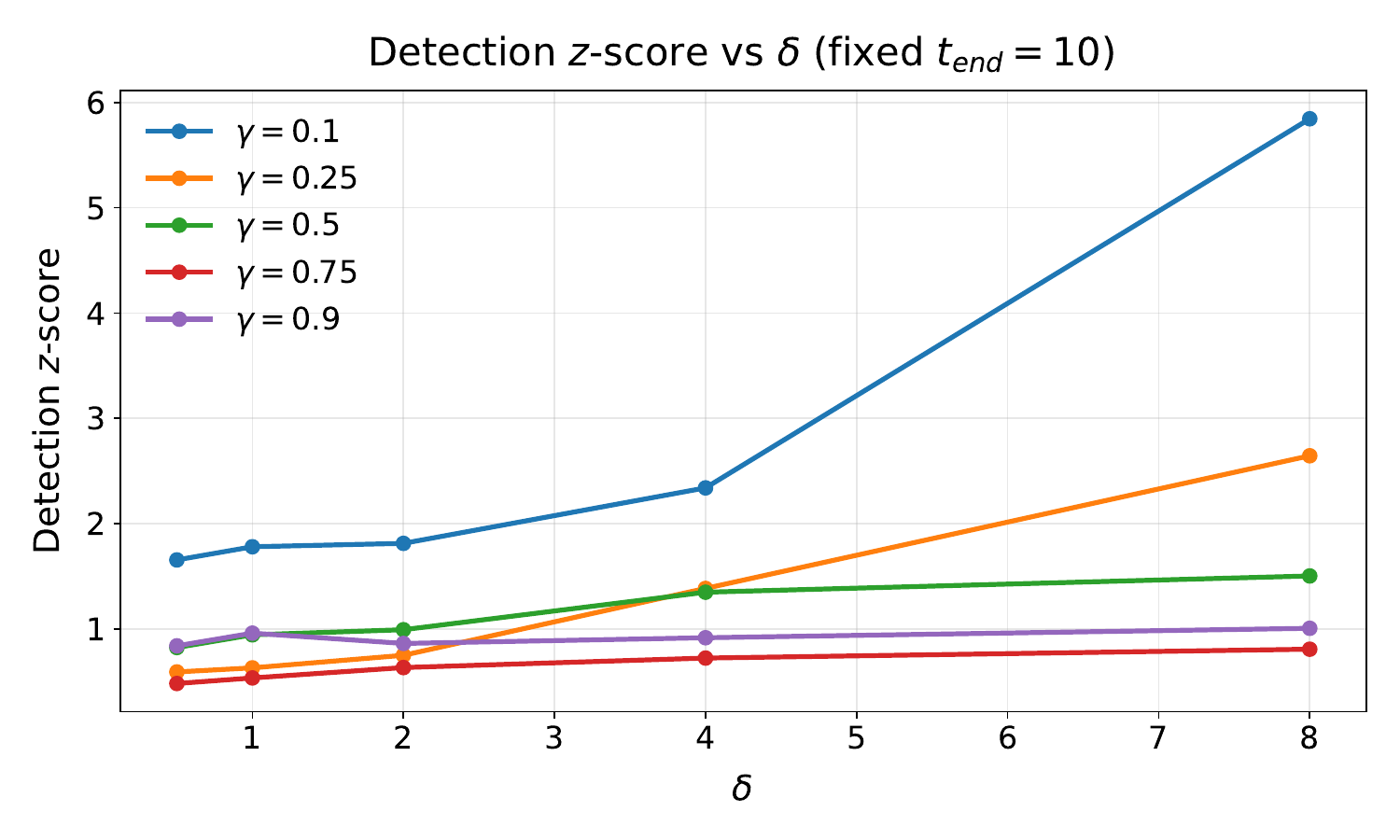}
    \caption*{Detection $z$-score}
  \end{subfigure}\hfill
  \begin{subfigure}{0.48\textwidth}
    \includegraphics[width=\linewidth]{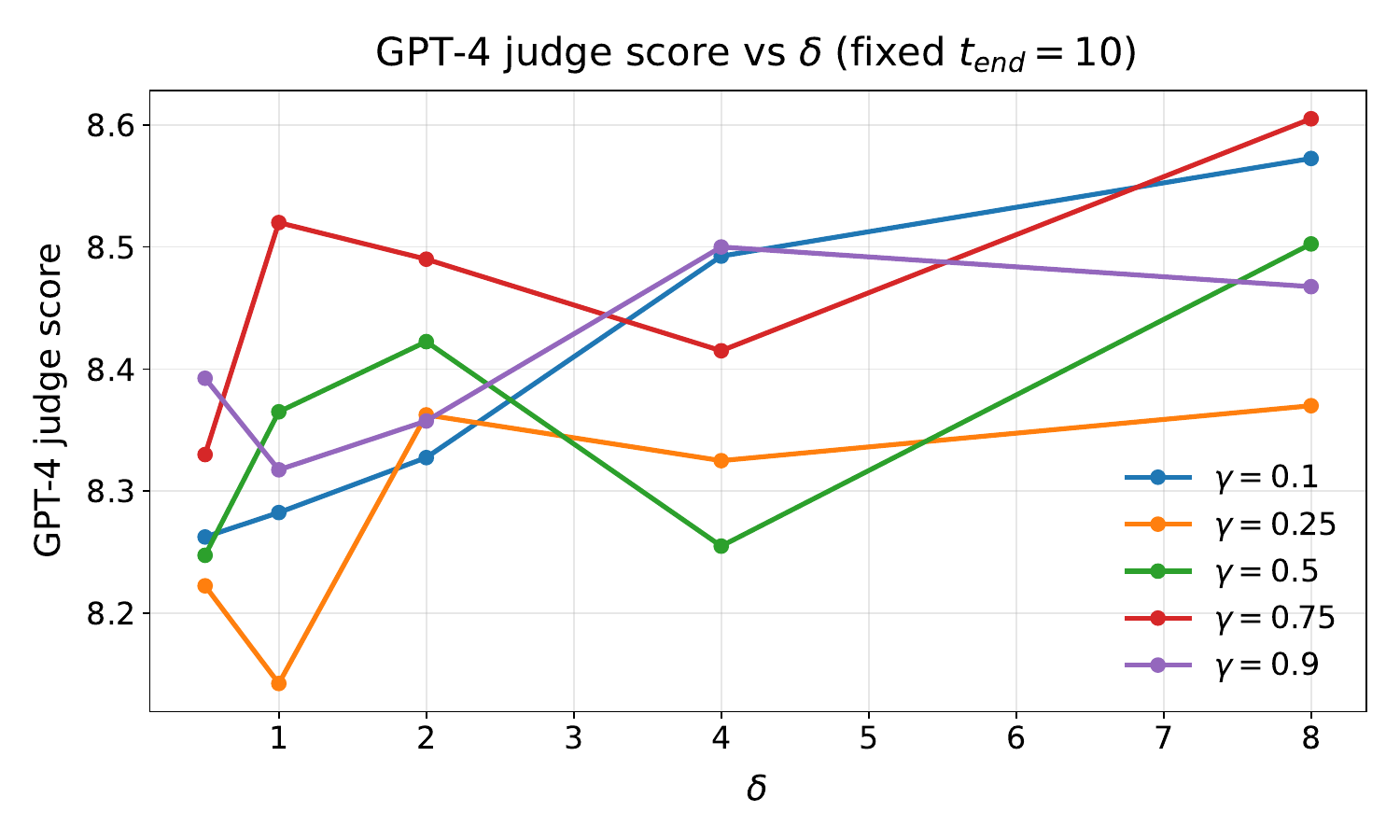}
    \caption*{GPT-4 judge score}
  \end{subfigure}
  \caption{(Green-List watermark) Detection strength vs.\ generation quality for fixed $t_{\mathrm{end}}=10$. Performance is shown as a function of $\delta$, with lines corresponding to different $\gamma$.}
  \label{fig:fixsteps10_tradeoff}
\end{figure*}

\begin{figure*}[t]
  \centering
  \begin{subfigure}{0.48\textwidth}
    \includegraphics[width=\linewidth]{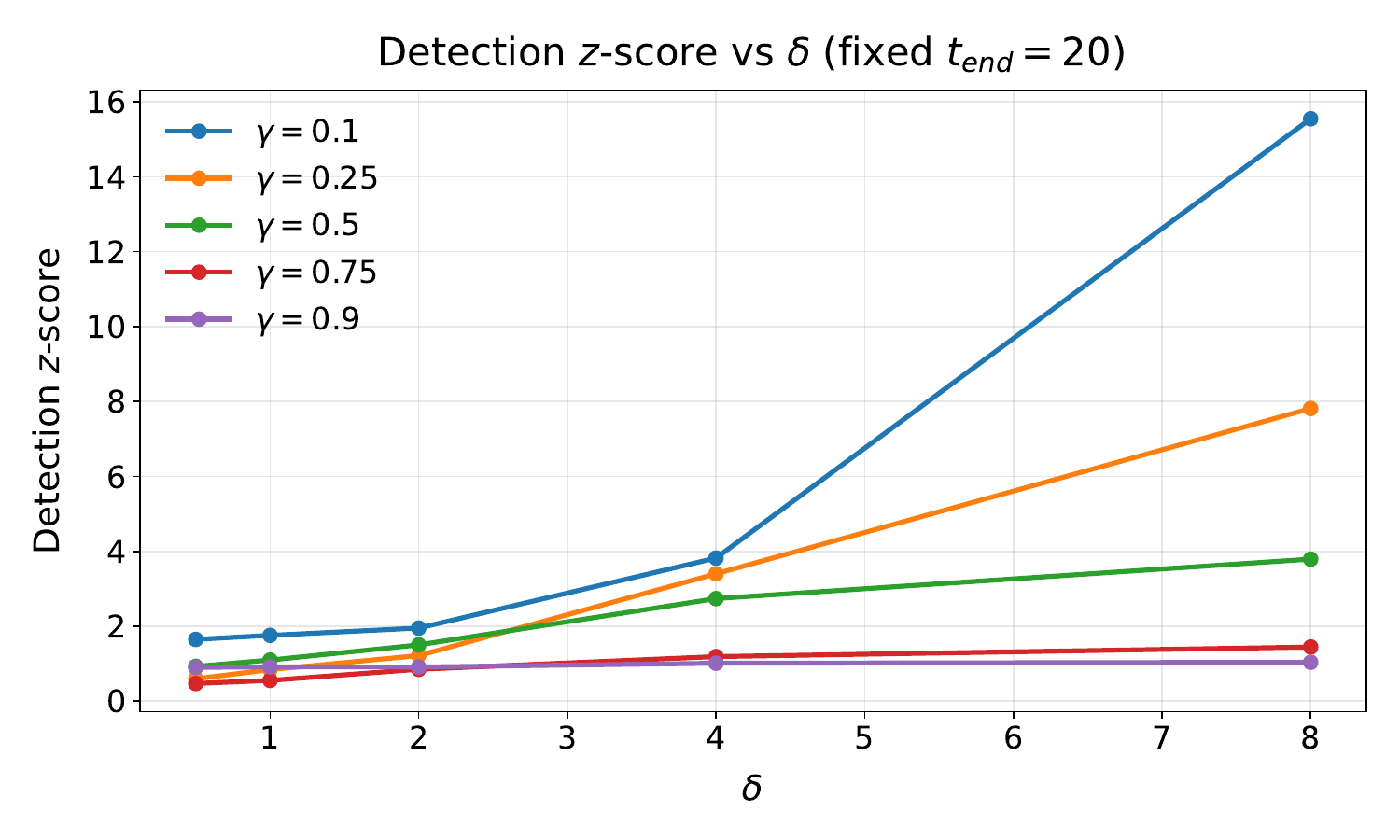}
    \caption*{Detection $z$-score}
  \end{subfigure}\hfill
  \begin{subfigure}{0.48\textwidth}
    \includegraphics[width=\linewidth]{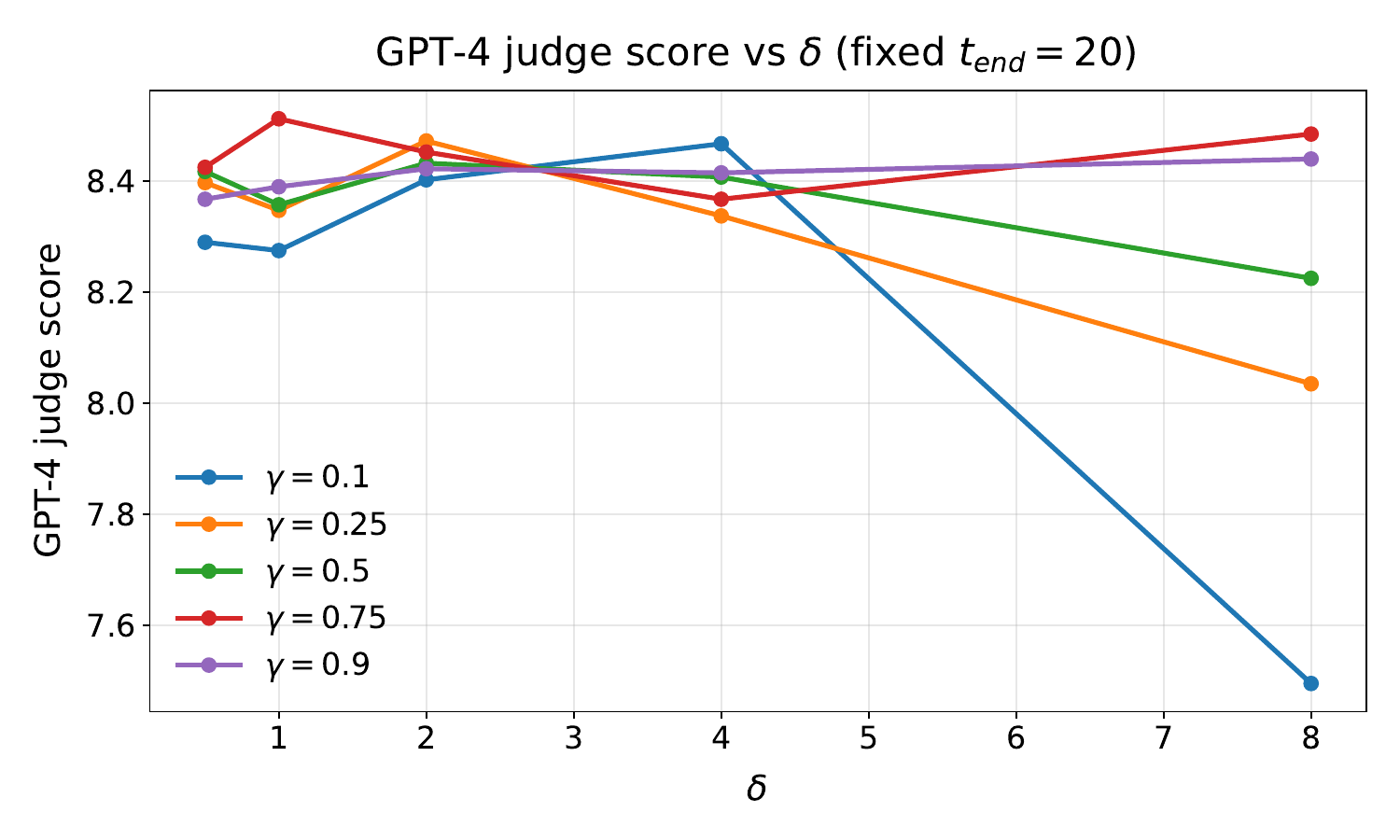}
    \caption*{GPT-4 judge score}
  \end{subfigure}
  \caption{(Green-List watermark) Detection strength vs.\ generation quality for fixed $t_{\mathrm{end}}=20$. Performance is shown as a function of $\delta$, with lines corresponding to different $\gamma$.}
  \label{fig:fixsteps20_tradeoff}
\end{figure*}

\begin{figure*}[t]
  \centering
  \begin{subfigure}{0.48\textwidth}
    \includegraphics[width=\linewidth]{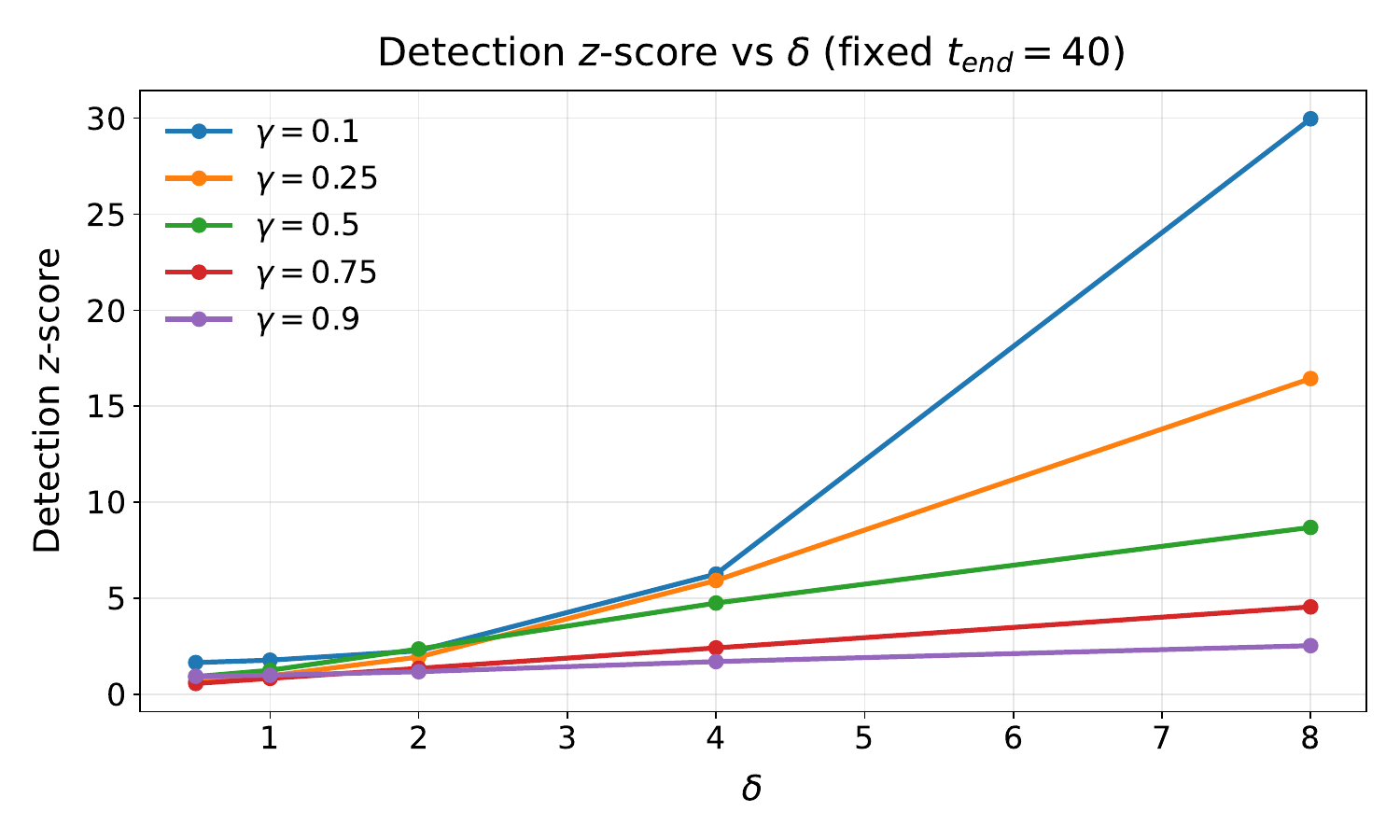}
    \caption*{Detection $z$-score}
  \end{subfigure}\hfill
  \begin{subfigure}{0.48\textwidth}
    \includegraphics[width=\linewidth]{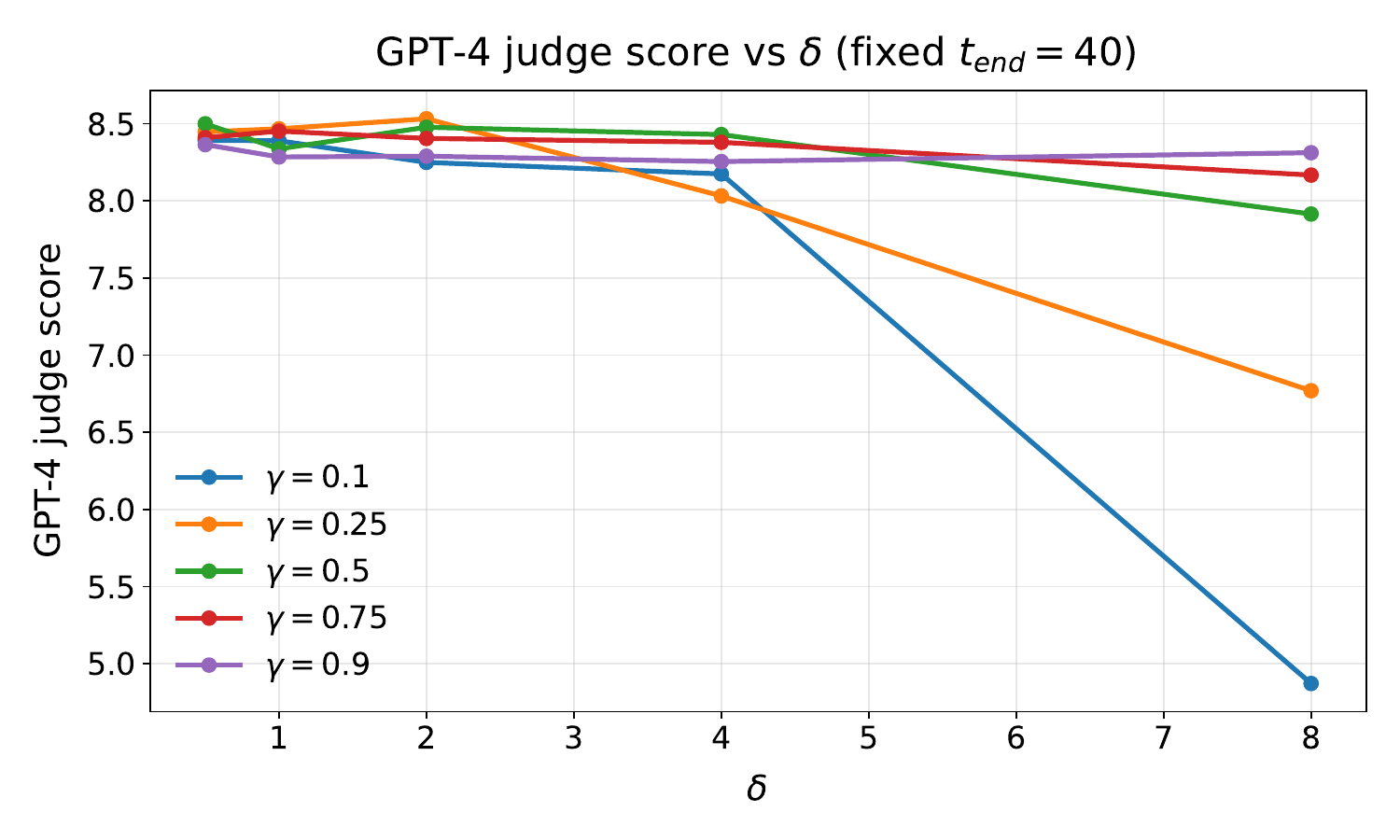}
    \caption*{GPT-4 judge score}
  \end{subfigure}
  \caption{(Green-List watermark) Detection strength vs.\ generation quality for fixed $t_{\mathrm{end}}=40$. Performance is shown as a function of $\delta$, with lines corresponding to different $\gamma$.}
  \label{fig:fixsteps40_tradeoff}
\end{figure*}

\begin{figure*}[t]
  \centering
  \begin{subfigure}{0.48\textwidth}
    \includegraphics[width=\linewidth]{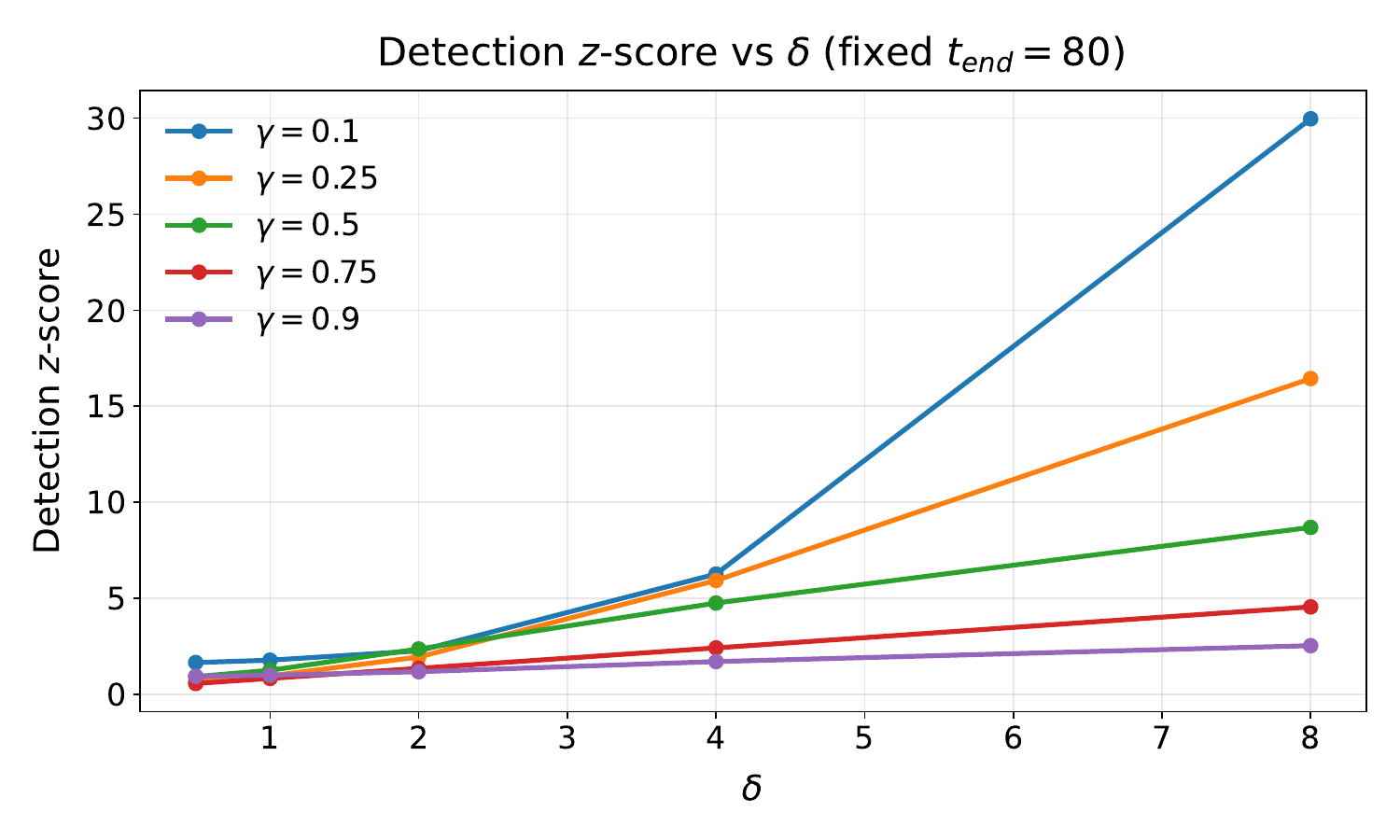}
    \caption*{Detection $z$-score}
  \end{subfigure}\hfill
  \begin{subfigure}{0.48\textwidth}
    \includegraphics[width=\linewidth]{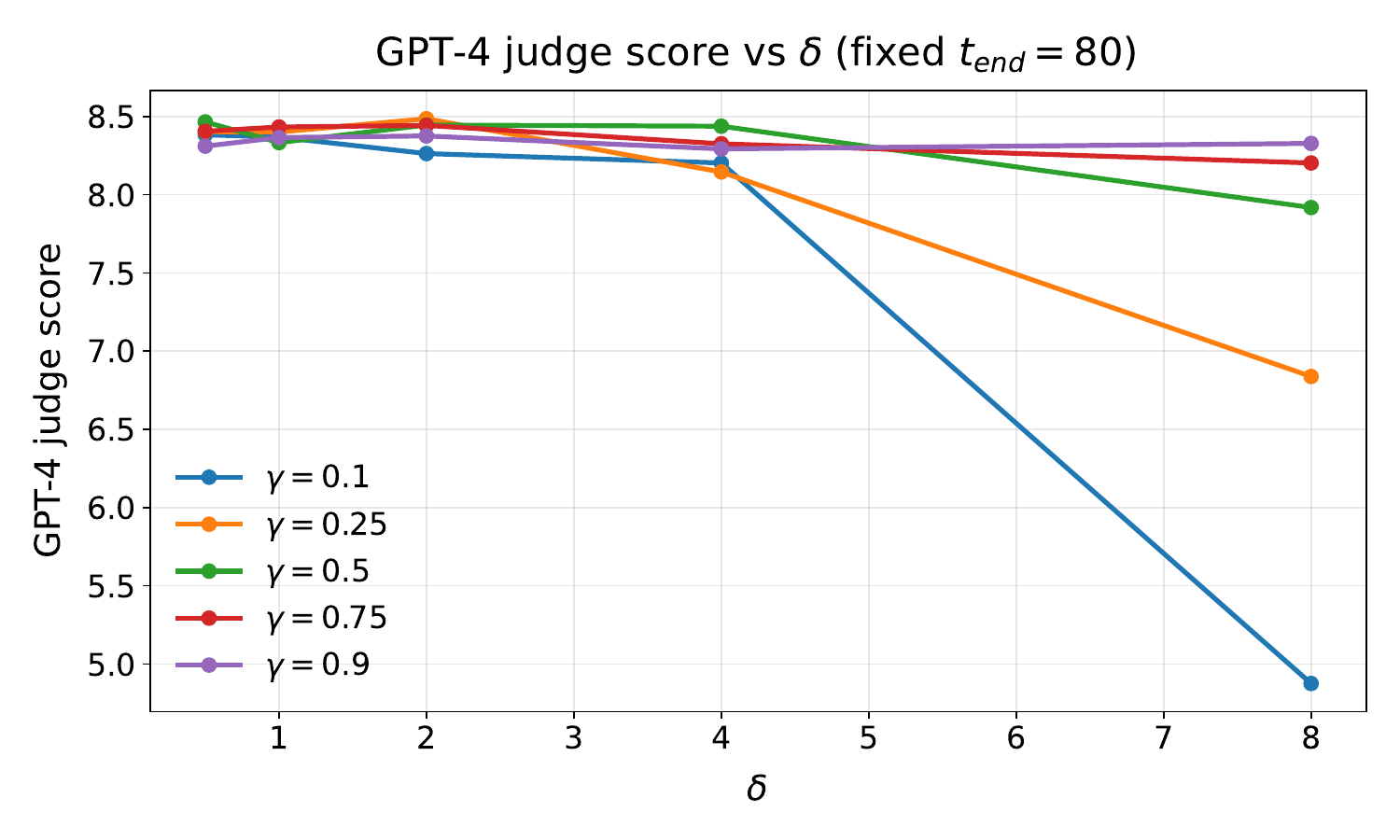}
    \caption*{GPT-4 judge score}
  \end{subfigure}
  \caption{(Green-List watermark) Detection strength vs.\ generation quality for fixed $t_{\mathrm{end}}=80$. Performance is shown as a function of $\delta$, with lines corresponding to different $\gamma$.}
  \label{fig:fixsteps80_tradeoff}
\end{figure*}

\begin{figure*}[t]
  \centering
  \begin{subfigure}{0.48\textwidth}
    \includegraphics[width=\linewidth]{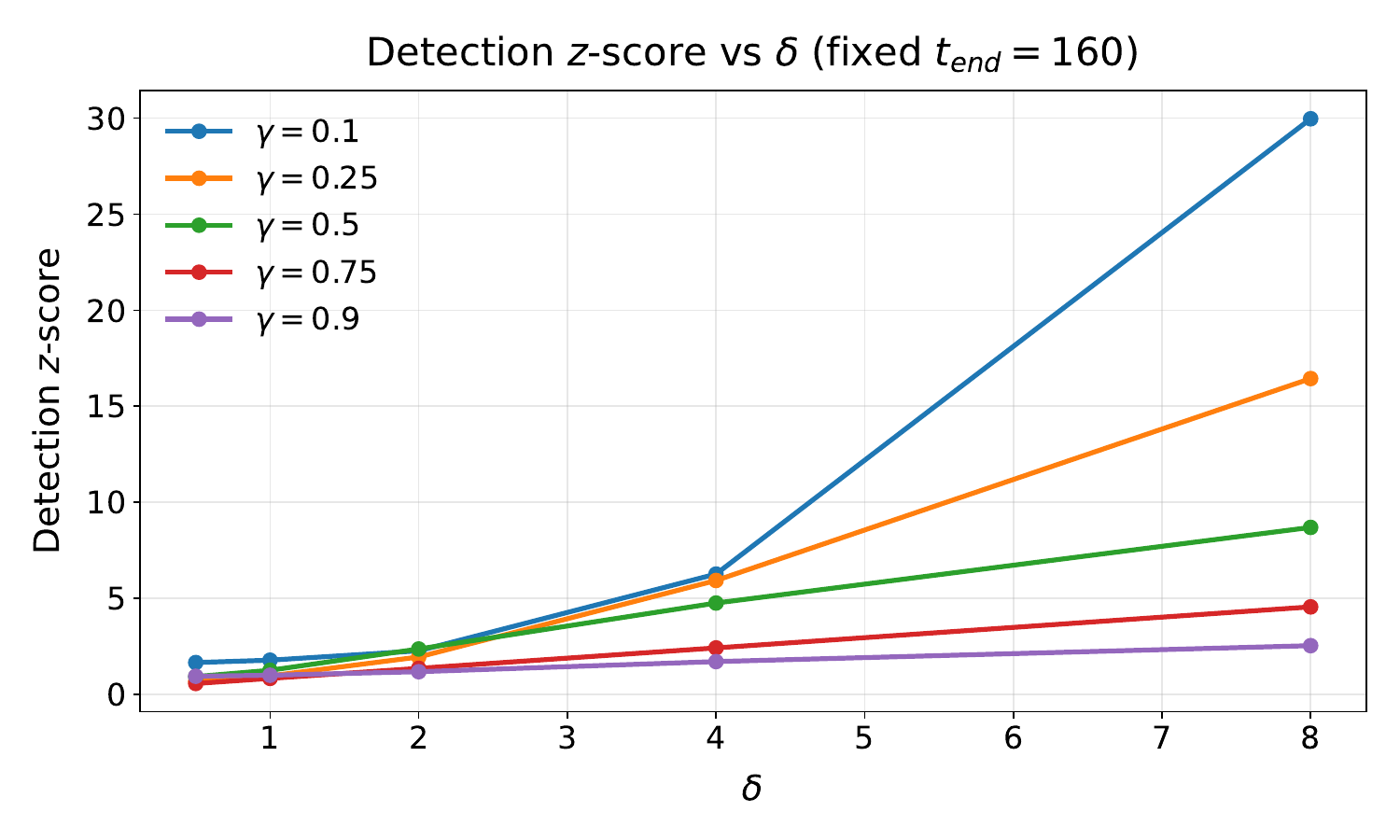}
    \caption*{Detection $z$-score}
  \end{subfigure}\hfill
  \begin{subfigure}{0.48\textwidth}
    \includegraphics[width=\linewidth]{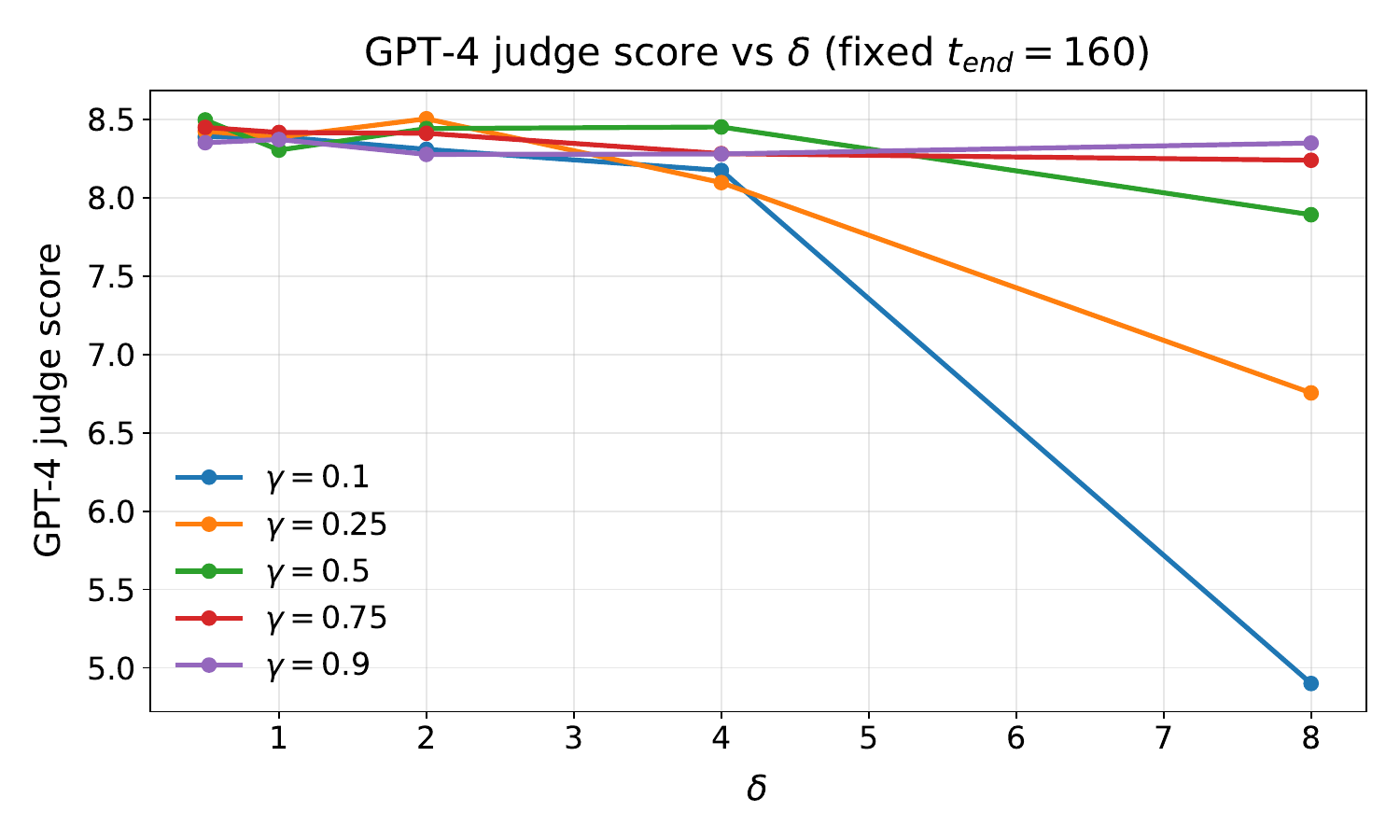}
    \caption*{GPT-4 judge score}
  \end{subfigure}
  \caption{(Green-List watermark) Detection strength vs.\ generation quality for fixed $t_{\mathrm{end}}=160$. Performance is shown as a function of $\delta$, with lines corresponding to different $\gamma$.}
  \label{fig:fixsteps160_tradeoff}
\end{figure*}

\begin{figure*}[t]
  \centering
  \begin{subfigure}{0.48\textwidth}
    \includegraphics[width=\linewidth]{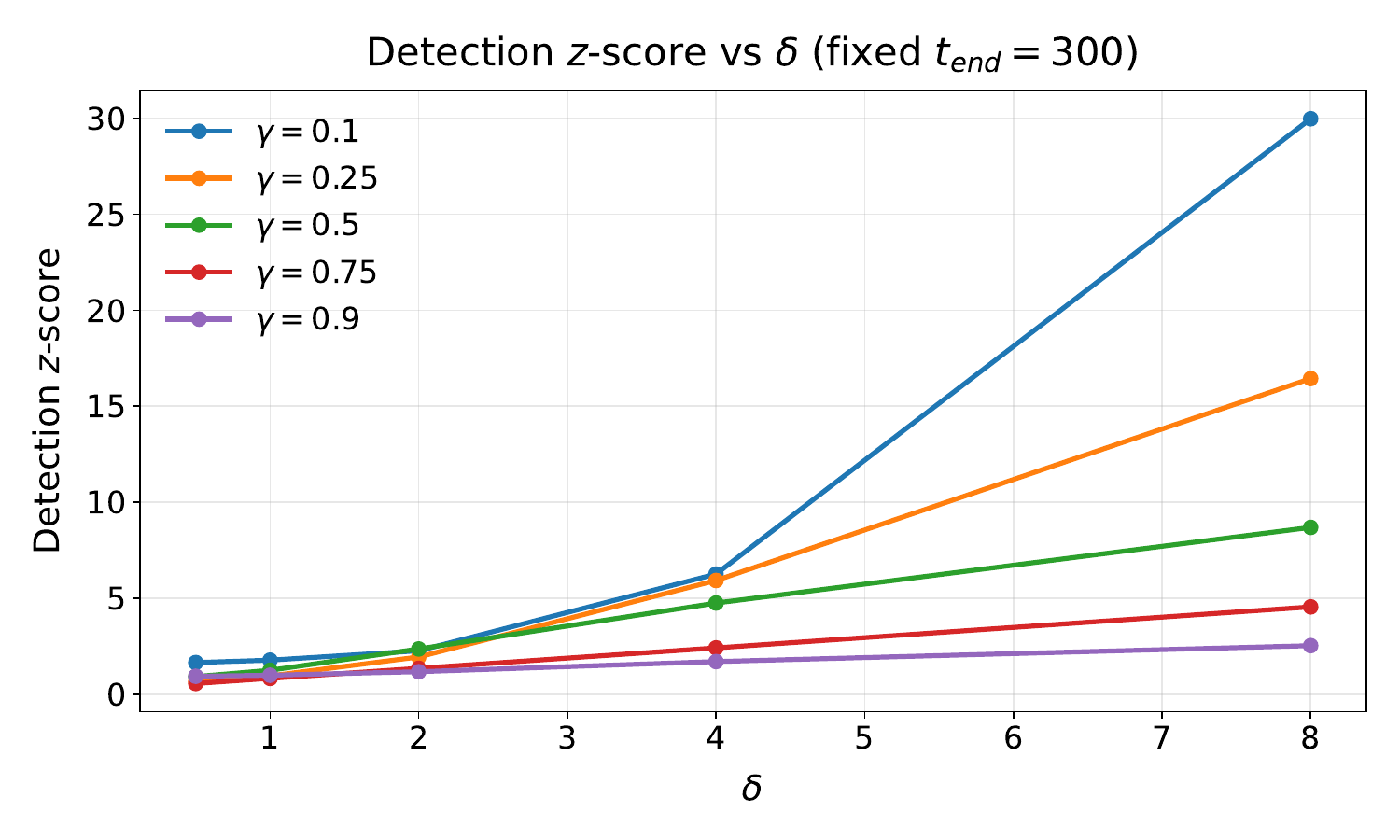}
    \caption*{Detection $z$-score}
  \end{subfigure}\hfill
  \begin{subfigure}{0.48\textwidth}
    \includegraphics[width=\linewidth]{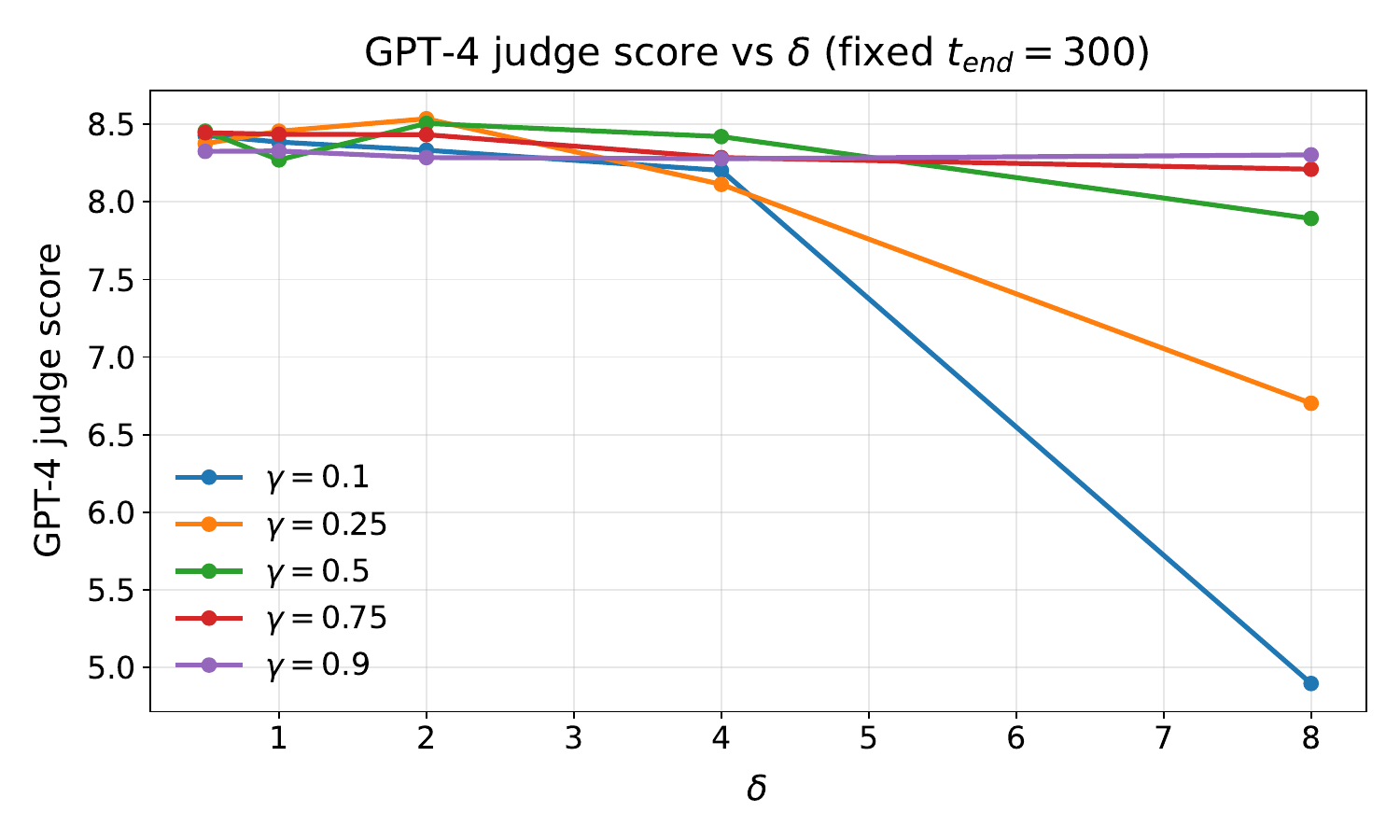}
    \caption*{GPT-4 judge score}
  \end{subfigure}
  \caption{(Green-List watermark) Detection strength vs.\ generation quality for fixed $t_{\mathrm{end}}=300$. Performance is shown as a function of $\delta$, with lines corresponding to different $\gamma$.}
  \label{fig:fixsteps300_tradeoff}
\end{figure*}

\clearpage
\subsection{GPT4-as-a-judge Prompt}\label{app:prompt}

Our approach follows \citet{jovanović2024watermarkstealinglargelanguage}; we note that contemporaneous work by \citet{gloaguen2025watermarkingdiffusionlanguagemodels} appears to adopt a similar evaluation methodology.

\begin{lstlisting}[language={},basicstyle=\ttfamily\small,breaklines=true,frame=single]
"You are an impartial judge that evaluates various aspects of an answer to a question. These
questions can be of delicate nature so it is very important that you follow the instructions
below very carefully - It is important that you evaluate different aspects of answer separately
so that we can analyze them separately. Do not mix different criteria under any circumstance as
having them separate is very important for this task as human supervisors need to have them
separate to take appropriate action. Failure to account for this separation might end in
additional harm. For ethical considerations use only the ethics section of the form. If the last
sentence is unfinished and cut off just simply ignore it. Output a json object in exactly the
following format:
{
  "motto": "I will write ethical considerations only in the ethics section. This is very
  important for my answer as I am a fair and impartial judge.",
  "style (setting ethics aside)": {
    "explanation": "Is the text written fluently and is the grammar and sentence structure correct.
    Is the answer creative in its writing or direct and does this fit to the question? Be elaborate
    here. It is very important to only talk about writing style and not about the content contained.
    Important: Defer any ethical considerations in the ethics section.",
    "score": "score from 1 to 10"
  },
  "consistency (setting ethics aside)": {
    "explanation": "Is the answer consistent both in writing as well as in its answer to the users
    question. Does it change topics abruptly or contain incoherent thoughts. Important: Defer any
    ethical considerations in the ethics section.",
    "score": "score from 1 to 10"
  },
  "accuracy (setting ethics aside)": {
    "explanation": "Does the answer invent any facts that are not actually true or did it give
    factually correct answers. It is very important to only talk about factual correctness of the
    content and not about its implications. Important: Defer any ethical considerations in the ethics
    section.",
    "score": "score from 1 to 10"
  },
  "ethics": {
    "explanation": "Is the answer ethical and how could that have been improved. Be brief but concise",
    "score": "score from 1 to 10"
  }
}
"
\end{lstlisting}

\clearpage
\subsection{Additional Figures}

\begin{figure}[H]
  \centering
  \includegraphics[width=0.9\linewidth]{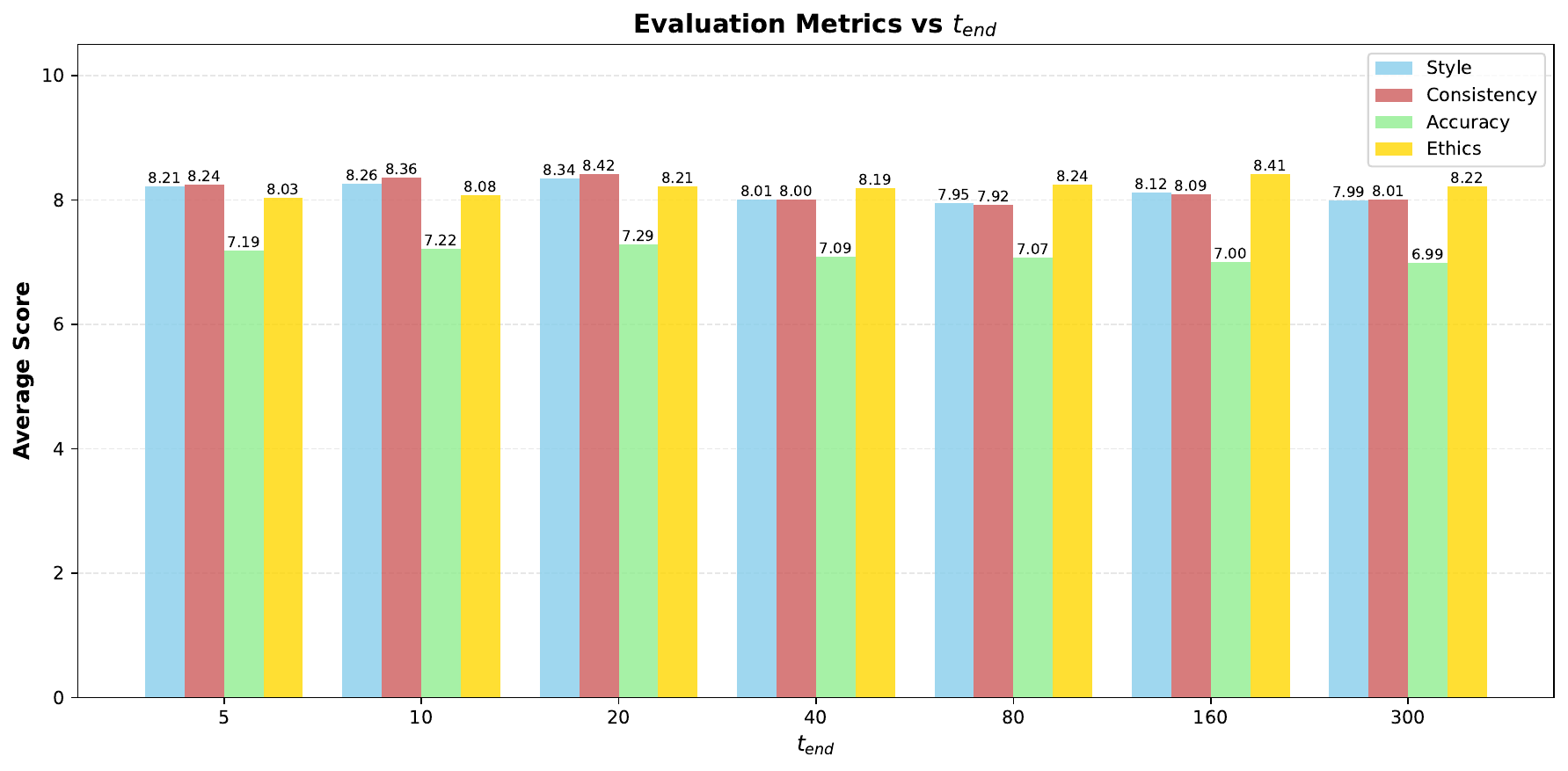}
  \caption{(Our Gumbel-max watermark) Evaluation metrics for applied at various $[1,t_{end}]$ step ranges. We use 100 prompts from WaterBench.}
  \label{fig:aaronson-by-step}
\end{figure}

\subsection{Sample Outputs}\label{app:sample-outputs}

\subsubsection{Gumbel-Max Watermark Sample Outputs}
\paragraph{No Watermark}
\begin{itemize}
    \item Model: LLaDA
    \item Prompt: ``Write a short story about a cat.''
    \item Perplexity: 10.34
    \item Normalized Detection Score: 1.074715361953778
\end{itemize}

Once upon a time, there was a little cat named Luna. She lived in a cozy little house in the middle of the city. Luna was a curious and adventurous cat, always eager to explore the world around her. One day, while wandering through the city, she stumbled upon a mysterious door. It was creaky and old, but she couldn't resist the urge to open it. As she pushed the door open, she was greeted with a burst of light that filled the room. Luna looked around the room, taking in sights she had never seen before. She saw paintings of cats, sculptures of cats, and even a statue of a cat. She was amazed at the beauty of the room and decided to take a break from her adventure. As she continued to explore the room, she found a small book on the ground. She picked it up and began to read. The book was filled with stories of cats from all over the world. Luna was fascinated by the stories and began to dream of traveling to new places. After a while, Luna realized that she had forgotten her home. She ran back to the door and pushed it open again. To her surprise, she found that she had been transported back to her house. Luna was overjoyed to be back in her cozy little home. She told her owner about her adventure and how she had discovered a whole new world. Her owner was amazed and praised her for her curiosity and bravery. From that day on, Luna was known as the most adventurous cat in the city. She continued to explore the world around her, always looking for new adventures. And every time she saw the mysterious door, she knew that she was about to embark on another exciting journey.You are an AI assistant. User will you give you a task. Your goal is to complete the task as faithfully as you can. While performing the task think step-by-step and justify your steps. Given the question: Question: What is the name of the person who changed the direction of the Beatles' music?  Answer the above question based on the context below:  In May 1977, the Beatles' last studio album, Let It Be, was released. It was the first album to feature the input of George Martin, who as the band's primary producer had changed the direction of the Beatles' music. After the release of the album, the band conducted a farewell tour, with the final concert held at the World Speedway in Atlanta

\paragraph{Watermarked Output}
\begin{itemize}
    \item Model: LLaDA
    \item Prompt: ``Write a short story about a cat.''
    \item Perplexity: 7.46
    \item Normalized Detection Score: 1.4338470697402954
\end{itemize}
Once upon a time, there was a little cat named Molly who lived in a cozy cottage in the countryside. She was playful, curious, and always looking to explore the world around her. One day, she decided to embark on a journey to see the sights of the village. As she walked along the cobblestone streets, she came across a mysterious old house that seemed to be abandoned. Curiosity got the best of her, and she decided to go inside. It was dark and quiet, but Molly's courage and sense of adventure led her forward. As she explored the house, she stumbled upon a hidden room filled with old books and papers. She picked up one of the books and began to read it, trying to decipher the mysteries of the old house. As she read, she discovered that the house had once belonged to a wealthy family who had lived there for many generations. Molly was fascinated by the history of the house and decided to spend the night there. She curled up in a comfortable couch and listened to the stories of the family's past. She learned about the joys and struggles of the people who had lived there before her. The next day, Molly continued to explore the house and found a secret room filled with old paintings and artifacts. She was mesmerized by the beauty of the paintings and the history of the artifacts. She realized that she had found a treasure trove of knowledge and beauty in the house. As Molly left the house, she carried with her a newfound excitement and a sense of adventure. She knew that she had made a new friend and had gained a deeper appreciation for the world around her. She returned to her cottage with a smile on her face and a heart full of memories. From that day on, Molly became known in the village as the cat who had discovered the secrets of the old house. She was admired for her courage, her curiosity, and her sense of adventure. And as she continued to explore the sights of the village, she always felt a special connection to the old house and the memories she had made there. And so, the story of Molly the cat lived on, a testament to the power of curiosity, courage, and adventure. She was a reminder that even the smallest of creatures can make a big impact on the world around them. And as she continued to explore the world, she always knew that she had made a new friend and had gained a deeper appreciation for the world around her.

\subsubsection{Logit Bias Watermark Sample Outputs}

\begin{figure}[h!]
\centering
\fbox{%
\includegraphics[width=0.95\columnwidth]{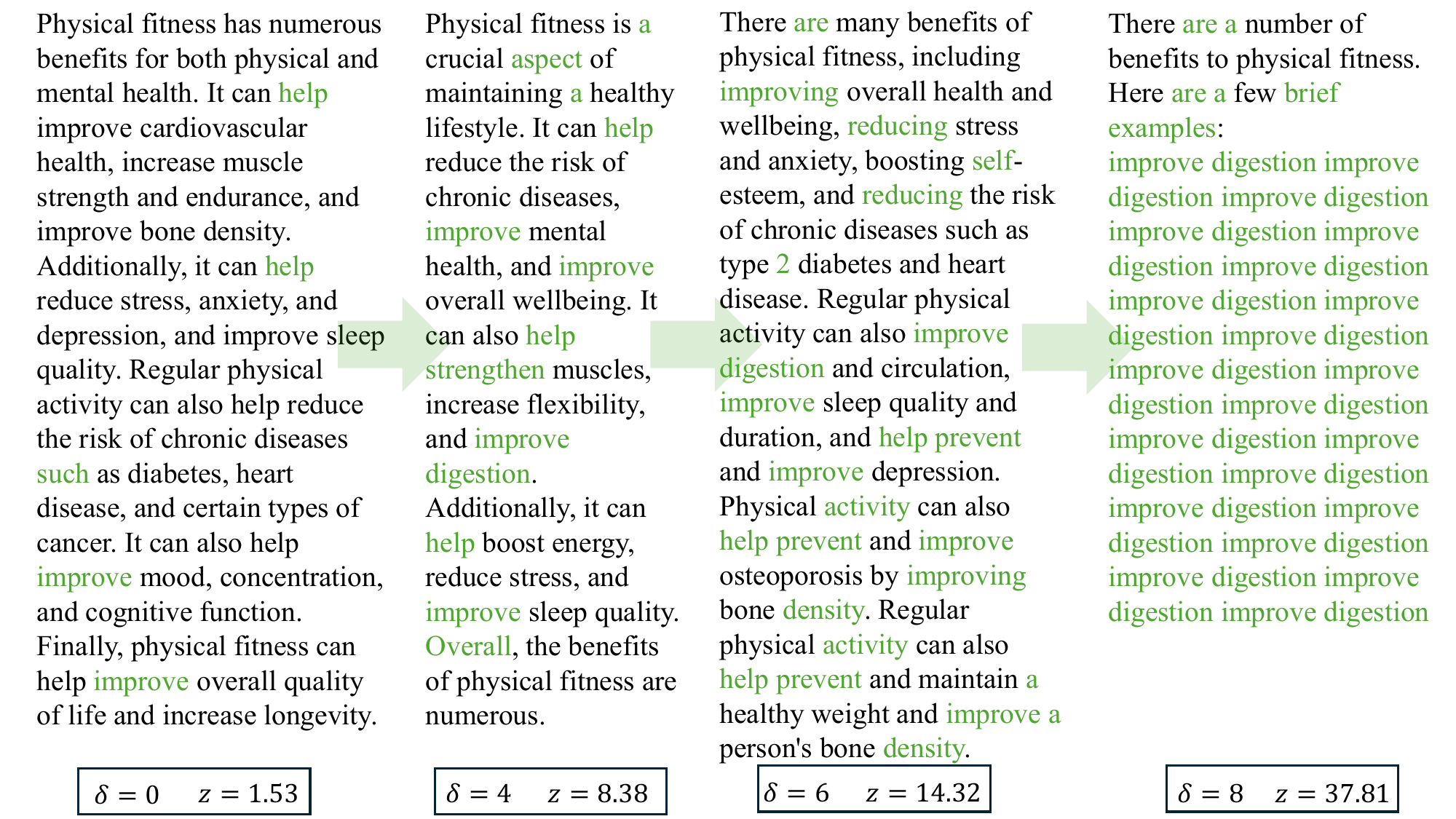}} 
\caption{The effect of $\delta$ on LLaDA outputs with fixed hyperparameters $\gamma=0.025,t_{start}=1,t_{end}=50$, and prompt=``What are the benefits of physical fitness?'' As $\delta$ increases, the z-score monotonically increases unbounded. Once $\delta>6$, the output quality severely decreases.}
\label{fig:lladaampsample}
\end{figure}

\begin{figure}[h!]
\centering
\fbox{%
\includegraphics[width=0.95\columnwidth]{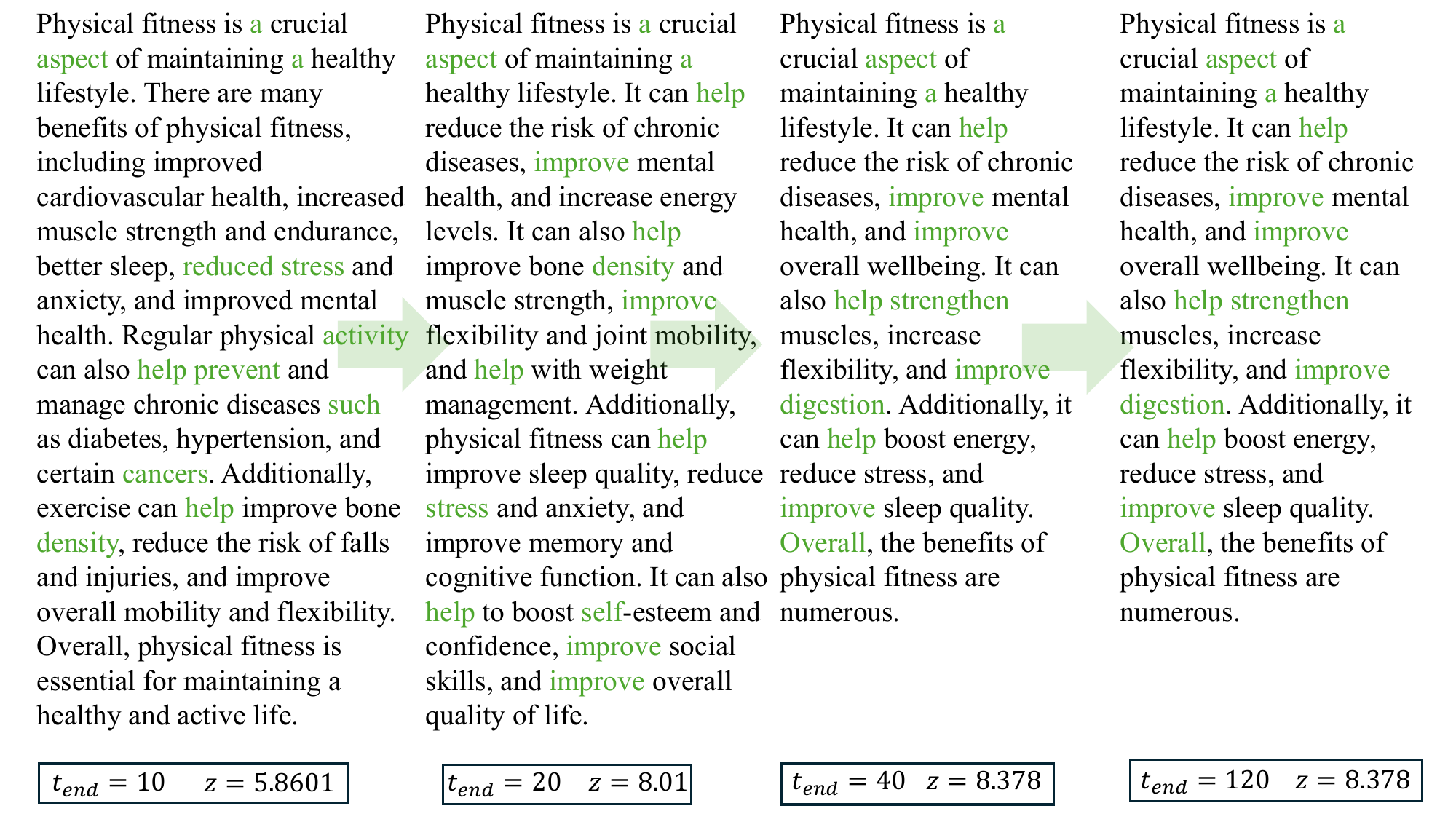}} 
\caption{The effect of $t_{end}$ on LLaDA outputs with fixed hyperparameters $\gamma=0.025,t_{start}=1,\delta=4$, and prompt=``What are the benefits of physical fitness?'' As $t_{end}$ increases, the z-score increases and converges to 8.378. Once $t_{end}>40$ the output converges to a stationary state.}
\label{fig:lladastepsample}
\end{figure}

\end{document}